\documentclass[11pt]{article}

\usepackage[a4paper,margin=1in]{geometry}
\usepackage{amsmath,amssymb,amsthm,mathtools}
\usepackage{mathrsfs}
\usepackage{enumitem}
\usepackage{hyperref}
\usepackage{microtype}

\hypersetup{
  colorlinks=true,
  linkcolor=blue,
  citecolor=blue,
  urlcolor=blue
}

\newtheorem{theorem}{Theorem}[section]
\newtheorem{proposition}[theorem]{Proposition}
\newtheorem{lemma}[theorem]{Lemma}
\newtheorem{corollary}[theorem]{Corollary}
\newtheorem{remark}[theorem]{Remark}
\theoremstyle{definition}

\newcommand{\supp}{\operatorname{supp}}
\newcommand{\norm}[1]{\left\lVert #1\right\rVert}

\newcommand{\im}{\operatorname{Im}}

\newcommand{\e}{\mathrm e}

\title{Inverse Resonance Problem for the Half-Line Schr\"odinger Operator\\
with a Complex Robin Boundary Condition}
\author{Roman Shterenberg}
\date{}

\begin{document}

\maketitle

\begin{abstract}
We study the inverse resonance problem for the one-dimensional Schr\"odinger
operator on the half-line with a compactly supported complex-valued potential
and a complex Robin boundary condition
\[
u'(0)-a u(0)=0,\qquad a\in\mathbb C.
\]
The proof follows the transformation-operator, entire-function, and Fourier
analytic strategy developed by Marletta, Shterenberg and Weikard for the
Dirichlet problem \cite{MSW}.  The Dirichlet Jost function
$\psi(z,0)$ is replaced by the Robin characteristic function
\[
F_a(z;q)=\psi_x(z,0;q)-a\psi(z,0;q).
\]
We establish the corresponding large-energy estimates, Hadamard
factorization, finite-data stability, and uniqueness results.  The principal
new feature is that $F_a(z;q)=iz-a+O(1)$ rather than $1+O(z^{-1})$.
In the transformation-operator argument the relevant boundary quantity is
$B_x(0,t)-aB(0,t)$, together with the scalar term $B(0,0)$.  For fixed
$a\in\mathbb C$, and uniformly for $a$ in bounded subsets of $\mathbb C$,
the resulting inverse resonance estimates have the same qualitative
form as those proved in \cite{MSW} for the Dirichlet problem.  In particular, for a fixed Robin parameter $a$, the complete set of Robin
eigenvalues and resonances, counted with multiplicity, determines the potential
uniquely.

The text is produced with the help of AI ChatGPT for Academy.
\end{abstract}

\section{Introduction}

Consider
\begin{equation}
  -u''(x)+q(x)u(x)=z^2u(x),\qquad x>0,
  \label{eq:schrodinger}
\end{equation}
where $q$ is a complex-valued potential supported in $[0,1]$.  We impose the
Robin boundary condition
\begin{equation}
  u'(0)-a u(0)=0,\qquad a\in\mathbb C.
  \label{eq:robin}
\end{equation}

The inverse resonance problem for compactly supported half-line
Schr\"odinger operators has been studied extensively.  The present paper is
an extension of the approach of Marletta, Shterenberg and Weikard
\cite{MSW}, which treats the half-line problem with the \emph{Dirichlet}
boundary condition
\[
u(0)=0.
\]
We shall refer to \cite{MSW} throughout as the ``Dirichlet paper''.
The transformation-operator construction, Jensen estimates, canonical
products, Fourier estimates, and Volterra analysis used below are based on
that paper.

For $Q>0$, let
\begin{equation}
 B(Q)=
 \left\{
 q\in L^1(0,\infty):
 \supp q\subset[0,1],\quad
 \norm{q}_{L^1(0,1)}\le Q
 \right\}.
 \label{eq:BQ}
\end{equation}
The potentials in $B(Q)$ may be complex-valued, as in \cite{MSW}.

Let $\psi(z,x)$ denote the Jost solution, characterized by
\begin{equation}
 \psi(z,x)=\e^{izx},\qquad x\ge1.
 \label{eq:jost}
\end{equation}
For fixed $x\ge0$, $\psi(z,x)$ is entire of order at most one in $z$.
The transformation representation used in \cite[Section 2]{MSW} is
\begin{equation}
 \psi(z,x)=
 \e^{izx}
 +\int_x^{2-x}K_q(x,t)\e^{izt}\,dt.
 \label{eq:transformation}
\end{equation}

For the Dirichlet problem the characteristic function is $\psi(z,0)$.
In the present paper we replace it by
\begin{equation}
 \boxed{
 F_a(z;q):=\psi_x(z,0;q)-a\psi(z,0;q).
 }
 \label{eq:Fa}
\end{equation}
The purpose of the paper is to establish the analogue of the inverse
resonance results of \cite{MSW} for the characteristic function
\eqref{eq:Fa}, allowing $a$ to be complex.

There are two differences between the two problems.  First,
\begin{equation}
 F_a(z;q)=iz-a+O(1),
 \qquad |z|\to\infty,
 \label{eq:leading}
\end{equation}
in the high-energy region, whereas the Dirichlet Jost function is
normalized to $1$ at high energy.  Secondly, when the transformation
relation is differentiated at $x=0$, the relevant boundary expression is
\[
 B_x(0,t)-aB(0,t),
\]
rather than simply $B(0,t)$.  Once these modifications are made, the
analytic part of the proof follows the same structure as in
\cite{MSW}.

\section{Jost solutions and the Robin characteristic function}

Let $q\in B(Q)$.  The Jost solution is the unique solution of
\eqref{eq:schrodinger} satisfying \eqref{eq:jost}.  The transformation
operator gives \eqref{eq:transformation}.  In particular,
\begin{equation}
 \psi(z,0)
 =
 1+\int_0^2K_q(0,t)\e^{izt}\,dt.
 \label{eq:psi0}
\end{equation}

The characteristic function for the Robin problem is $F_a$ in
\eqref{eq:Fa}.  Since $\psi(z,0)$ and $\psi_x(z,0)$ are entire functions
of $z$, $F_a$ is entire.

\begin{proposition}
Let $q\in B(Q)$ and $a\in\mathbb C$.  If $\im z>0$, then
\[
 F_a(z;q)=0
\]
if and only if $z^2$ is an eigenvalue of the half-line realization of
\eqref{eq:schrodinger} subject to \eqref{eq:robin}.  Zeros in the lower
half-plane are resonances in the usual meromorphic continuation sense.
\end{proposition}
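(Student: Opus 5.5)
The plan is to identify $L^2$ eigenfunctions in the upper half-plane with multiples of the Jost solution, and then translate the Robin condition at $x=0$ into the vanishing of $F_a$.

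\textbf{Step 1: the Jost solution is the only $L^2$ candidate.} Fix $z$ with $\im z>0$. For $x\ge1$ we have $q=0$, so every solution of \eqref{eq:schrodinger} is a combination $c_1\e^{izx}+c_2\e^{-izx}$. Since $|\e^{izx}|=\e^{-x\im z}$ decays and $|\e^{-izx}|$ grows exponentially, a solution lies in $L^2(0,\infty)$ if and only if $c_2=0$. By uniqueness for the initial value problem, such a solution equals $c_1\psi(z,\cdot)$, using \eqref{eq:jost}. Local integrability near $0$ is automatic, because solutions are absolutely continuous on $[0,1]$ for $q\in L^1$.

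\textbf{Step 2: the equivalence for $\im z>0$.} Suppose $F_a(z;q)=0$. Then $\psi(z,\cdot)$ is a nonzero $L^2$ solution satisfying \eqref{eq:robin}. It lies in the maximal domain, since $\psi''=(q-z^2)\psi$ and $q\psi\in L^2$ because $q\psi$ is supported in $[0,1]$ where $\psi$ is bounded. Hence $z^2$ is an eigenvalue of the Robin realization $H_a$. Conversely, suppose $z^2$ is an eigenvalue with $\im z>0$. By Step 1 the eigenfunction is $c\,\psi(z,\cdot)$ with $c\ne0$, and the boundary condition gives $c\,F_a(z;q)=0$, so $F_a(z;q)=0$.

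\textbf{Step 3: the correspondence between $z$ and $z^2$.} The map $z\mapsto z^2$ is a bijection from $\{\im z>0\}$ onto $\mathbb C\setminus[0,\infty)$. It remains to check that every eigenvalue of $H_a$ in $\mathbb C\setminus[0,\infty)$ arises this way. For $\lambda=z^2$ with $\im z>0$, the argument of Step 2 shows $\lambda$ is an eigenvalue exactly when $F_a(z;q)=0$. If $F_a(z;q)\ne0$, then $\lambda$ lies in the resolvent set. To see this, build the resolvent kernel
\[
G(x,y)=\frac{\varphi(z,x_<)\,\psi(z,x_>)}{W(\psi,\varphi)},
\]
where $\varphi$ solves \eqref{eq:schrodinger} with $\varphi(0)=1$, $\varphi'(0)=a$. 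The Wronskian $W(\psi,\varphi)$ equals $F_a$ up to sign. Bounding $G$ in $L^2$ via the exponential decay of $\psi$ and the growth of $\varphi$ requires a standard Weyl-type estimate. This is not strictly needed for the stated equivalence, which only concerns eigenvalues, so I would mention it only briefly.

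\textbf{Step 4: resonances in the lower half-plane.} This part is partly definitional. For $\im z>0$ and $F_a(z;q)\ne0$, the resolvent kernel above has entries that are entire in $z$, because $\psi$, $\varphi$ and $F_a$ are entire. Therefore $(H_a-z^2)^{-1}$, viewed as an operator $L^2_{\mathrm{comp}}\to L^2_{\mathrm{loc}}$, continues meromorphically to all of $\mathbb C$. Its poles can occur only at zeros of $F_a$. A zero of $F_a$ gives a genuine pole, since the residue has rank one with kernel proportional to $\psi(z,x)\psi(z,y)$, using $\varphi\propto\psi$ at such $z$. This is exactly the usual definition of resonances.

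The main point needing care is the converse direction in Step 2, namely that every $L^2$ solution is a multiple of $\psi$. This follows from compact support of $q$ and the growth of $\e^{-izx}$. I expect no real obstacle. The only subtle issue is non-self-adjointness for complex $a$ and $q$, which does not affect these arguments because they use only ODE uniqueness and the Wronskian.
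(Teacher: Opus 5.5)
Your proposal is correct and follows essentially the paper's argument: for $\im z>0$ every $L^2$ solution is a multiple of the Jost solution, so $F_a(z;q)=0$ is exactly the Robin condition for an eigenfunction. Your Steps 3--4 only spell out the resolvent-kernel continuation that the paper leaves to the standard Dirichlet discussion. One small slip: for $q\in L^1$ the product $q\psi$ need not lie in $L^2(0,1)$, but this is harmless, since $\psi(z,\cdot)$ lies in the maximal domain simply because $\psi,\psi'$ are locally absolutely continuous and $-\psi''+q\psi=z^2\psi\in L^2(0,\infty)$.
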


\begin{proof}
If $\im z>0$, then the Jost solution decays exponentially at infinity and
therefore belongs to $L^2(0,\infty)$.  The equation
$F_a(z;q)=0$ is exactly
\[
\psi_x(z,0;q)-a\psi(z,0;q)=0,
\]
which is the Robin boundary condition.  Hence $\psi(z,\cdot)$ is an
eigenfunction.

Conversely, an $L^2$ solution for $\im z>0$ is a scalar multiple of the
Jost solution, so the Robin boundary condition implies $F_a(z;q)=0$.
The resonance statement follows from the standard analytic continuation
of the Jost solution, exactly as in the Dirichlet discussion in
\cite[Section 1]{MSW}.
\end{proof}

\section{High-energy estimates}

Differentiating the transformation representation, whenever the boundary
trace is represented by the corresponding derivative of the transformation
kernel, gives
\begin{equation}
 \psi_x(z,0)
 =
 iz-K_q(0,0)
 +\int_0^2K_{q,x}(0,t)\e^{izt}\,dt.
 \label{eq:psix}
\end{equation}
Equivalently, the estimates below may be obtained directly from the
differentiated Volterra equation, which avoids imposing unnecessary
pointwise regularity on $K_{q,x}$.

Combining \eqref{eq:psi0} and \eqref{eq:psix},
\begin{equation}
 F_a(z;q)
 =
 iz-a-K_q(0,0)
 +\int_0^2
 \left(K_{q,x}(0,t)-aK_q(0,t)\right)\e^{izt}\,dt.
 \label{eq:Frepresentation}
\end{equation}

Thus the free characteristic function is
\begin{equation}
 F_a^{(0)}(z)=iz-a.
 \label{eq:free}
\end{equation}

\begin{lemma}
Let $Q,A>0$.  There exists $C=C(Q,A)$ such that, for
$q\in B(Q)$ and $|a|\le A$,
\begin{align}
 |F_a(x;q)|&\le C(1+|x|),\qquad x\in\mathbb R,
 \label{eq:realbound}\\
 |F_a(z;q)|&\le C(1+|z|)\e^{2|z|},\qquad z\in\mathbb C.
 \label{eq:complexbound}
\end{align}
Moreover, for $\rho\ge\rho_0(Q,A)$ and
\[
 |z-3i\rho|\le\rho,
\]
one has
\begin{equation}
 \left|
 \frac{F_a(z;q)}{iz-a}-1
 \right|
 \le \frac{C(Q,A)}{\rho}.
 \label{eq:normalizedbound}
\end{equation}
\end{lemma}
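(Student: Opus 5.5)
The plan is to work from the representation \eqref{eq:Frepresentation}. I avoid pointwise bounds on $K_{q,x}$ and use only two inputs from \cite[Section 2]{MSW}, both uniform over $q\in B(Q)$.

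The first input is the Volterra bound on the kernel: $|K_q(x,t)|\le \tfrac12\sigma((x+t)/2)\e^{\sigma(x)}$, where $\sigma(x)=\int_x^1|q|$. In particular $|K_q(0,0)|\le \tfrac12 Q\e^{Q}$ and $\norm{K_q(0,\cdot)}_{L^1(0,2)}\le C(Q)$.

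The second input is the standard direct estimate on the Jost solution and its derivative:
\[
|\psi(z,0)-1|\le C(Q)\frac{\e^{2(\im z)_-}}{1+|z|},\qquad |\psi_x(z,0)-iz|\le C(Q)\e^{2(\im z)_-}.
\]
To justify the derivative bound (rather than relying on $K_{q,x}$), I would differentiate the Volterra equation
\[
\psi(z,x)=\e^{izx}+\int_x^1\frac{\sin z(t-x)}{z}\,q(t)\psi(z,t)\,dt,
\]
which gives
\[
\psi_x(z,x)=iz\,\e^{izx}-\int_x^1\cos z(t-x)\,q(t)\psi(z,t)\,dt.
\]
Then I would insert the Gronwall bound $|\psi(z,t)|\le C(Q)\e^{-\im z\, t}\e^{2(\im z)_-}$, or the sharper $\e^{-t\im z}$ in the closed upper half-plane. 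This also records the remark after \eqref{eq:psix}.

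With these in hand, \eqref{eq:realbound} and \eqref{eq:complexbound} are immediate. On $\mathbb R$ we have $|\e^{izt}|=1$, so
\[
|F_a(x)|\le |x|+A+C(Q)+A\,C(Q).
\]
For general $z$, $|\e^{izt}|\le \e^{2|z|}$ for $t\in[0,2]$, and the factor $1+|z|$ comes from the $iz$ term. All constants depend only on $Q$ and $A$.

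The step that needs care is \eqref{eq:normalizedbound}. In the disc $|z-3i\rho|\le\rho$ we have $2\rho\le\im z\le 4\rho$ and $|z|\le 4\rho$, so $|iz-a|\ge |z|-A\ge 2\rho-A\ge\rho$ once $\rho_0\ge A$. Writing
\[
F_a(z)-(iz-a)=\bigl(\psi_x(z,0)-iz\bigr)-a\bigl(\psi(z,0)-1\bigr),
\]
the upper half-plane estimates give $|F_a(z)-(iz-a)|\le C(Q)(1+A)$ with no exponential factor. Dividing by $|iz-a|\ge\rho$ yields $C(Q,A)/\rho$.

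The main obstacle is the claim that $\psi_x(z,0)-iz$ is merely $O(1)$, uniformly, for $\im z>0$ with $q$ only in $L^1$. The integral $\int_0^1\cos(zt)\,q(t)\psi(z,t)\,dt$ is bounded because $|\cos(zt)\,\psi(z,t)|\le C\e^{t\im z}\e^{-t\im z}=C$. I would verify this cancellation of the exponentials carefully, since it is exactly what prevents an $\e^{2\im z}$ loss. Indeed, for $\im z$ large it gives decay $O(1)$ with a constant like $Q\e^{Q}$, so the bound holds uniformly on the discs.
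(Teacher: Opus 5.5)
Your proof is correct in substance but follows a different route from the paper's. The paper argues from the kernel representation \eqref{eq:Frepresentation}. On the disc $|z-3i\rho|\le\rho$ one has $|\e^{izt}|\le 1$ for $t\in[0,2]$, so $F_a(z)-(iz-a)$ is bounded by $|K_q(0,0)|+\norm{K_{q,x}(0,\cdot)}_{L^1(0,2)}+A\norm{K_q(0,\cdot)}_{L^1(0,2)}$, and one then divides by $|iz-a|\ge c\rho$. The global bounds are referred to the Jost estimates of \cite[Lemma 3.2]{MSW}. You never touch $K_{q,x}$. Instead you differentiate the Volterra equation for $\psi$ and use the cancellation $|\cos(zt)\,\psi(z,t)|\le \e^{Q}$ for $\im z\ge 0$. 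This is exactly the alternative the paper mentions after \eqref{eq:psix} but does not carry out. Your route is self-contained and needs nothing beyond $q\in L^1$. The paper's one-line argument, by contrast, tacitly requires $\norm{K_{q,x}(0,\cdot)}_{L^1(0,2)}\le C(Q)$, which is only proved later, in the proof of Lemma~\ref{lem:Robin-Fourier-decomp}. In exchange, the paper's route displays $F_a-(iz-a)$ as a constant plus the Fourier transform of an integrable function on $[0,2]$, which is the form reused in the later Fourier arguments.

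One correction is needed in the lower half-plane. The Gronwall bound you quote, $|\psi(z,t)|\le C(Q)\e^{-t\im z}\e^{2(\im z)_-}$, is too crude. Combined with $|\cos(zt)|\le \e^{t|\im z|}$ it only yields $|\psi_x(z,0)-iz|\le C(Q)\e^{4(\im z)_-}$, hence \eqref{eq:complexbound} with $\e^{4|z|}$ in place of $\e^{2|z|}$. What you need is $|\psi(z,t)|\le \e^{Q}\e^{(2-t)|\im z|}$ for $\im z<0$ and $0\le t\le 1$. This follows from the same Gronwall argument applied to $\e^{t|\im z|}|\psi(z,t)|$, or directly from \eqref{eq:transformation}, since the integration there runs over $s\le 2-t$. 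With it, $|\cos(zt)\,\psi(z,t)|\le \e^{Q}\e^{2|\im z|}$ and your second input holds as stated.

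Relatedly, you should derive \eqref{eq:complexbound} from your two Jost estimates together with $\e^{2(\im z)_-}\le \e^{2|z|}$. Using $|\e^{izt}|\le\e^{2|z|}$ in \eqref{eq:Frepresentation} instead brings back the $L^1$ bound on $K_{q,x}(0,\cdot)$ that you set out to avoid. Your treatment of \eqref{eq:realbound} and \eqref{eq:normalizedbound} is fine as written.
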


\begin{proof}
The estimates follow from the transformation representation and the
corresponding estimates for the Jost solution used in
\cite[Lemma 3.2]{MSW}.  The only additional contribution is the explicit
linear term $iz-a$.  On the circle centered at $3i\rho$ the integral
terms in \eqref{eq:Frepresentation} are uniformly $O(Q)$ (the interval
of integration has fixed length), while $|iz-a|\ge c\rho$ for $\rho$
sufficiently large relative to $A$.  After division by $iz-a$ these
integral terms are therefore $O(\rho^{-1})$, which gives
\eqref{eq:normalizedbound}.
\end{proof}

\section{Hadamard factorization}

Since $F_a$ is entire of order at most one, Hadamard's theorem gives
\begin{equation}
 F_a(z;q)
 =
 z^{m_a}\e^{g_a(z)}
 \prod_n E\left(\frac{z}{z_n}\right),
 \qquad
 E(w)=(1-w)\e^w,
 \label{eq:hadamard}
\end{equation}
where $z_n$ are the nonzero zeros of $F_a$, counted with multiplicity,
and
\[
 g_a(z)=c_1z+c_0.
\]

The expression $iz-a$ is used only as an asymptotic normalization; it is
not asserted to be a factor of $F_a$.  In particular, the point $z=-ia$ is
the zero of the free characteristic function, but it need not be a zero of
$F_a(z;q)$.  Thus $F_a(z;q)/(iz-a)$ is meromorphic, with a possible simple
pole at $z=-ia$, and is used below only in the high-energy region where
$|iz-a|\asymp |z|$.  In that region,
\[
 \frac{F_a(z;q)}{iz-a}=1+O(z^{-1}).
\]

\begin{lemma}
For every $Q,A>0$ there exist $C=C(Q,A)$ and $R_0=R_0(Q,A)$ such that
the number $N_a(r)$ of zeros of $F_a$ in $|z|<r$ satisfies
\begin{equation}
 N_a(r)\le C(1+r),\qquad r\ge1.
 \label{eq:zero-count}
\end{equation}
Moreover, if
\[
 \Pi_a(R,z)
 =
 \prod_{|z_n|\ge R}
 E\left(\frac z{z_n}\right),
\]
then, for $|z|\le R/2$,
\begin{equation}
 |\Pi_a(R,z)-1|
 \le
 \frac{C(Q,A)|z|^2}{R}
 \exp\left(\frac{C(Q,A)|z|^2}{R}\right).
 \label{eq:tailproduct}
\end{equation}
\end{lemma}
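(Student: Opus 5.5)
The plan is to run Jensen's formula on a disc where $F_a$ is bounded below, and then bound the tail of the canonical product term by term. Both steps use only the previous lemma.

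\emph{Zero count.} Jensen's formula needs a point where $F_a$ is not small. By \eqref{eq:normalizedbound} with $\rho=\rho_0(Q,A)$, the point $z_0=3i\rho_0$ satisfies
\[
|F_a(z_0;q)|\ge \tfrac12|iz_0-a|\ge c\rho_0>0,
\]
uniformly for $q\in B(Q)$ and $|a|\le A$. Apply Jensen's formula on the disc $D(z_0,2s)$. The growth bound \eqref{eq:complexbound} controls the boundary integral:
\[
\log\max_{|z-z_0|=2s}|F_a|\le \log C+\log(1+|z_0|+2s)+2(|z_0|+2s).
\]
Subtracting $\log|F_a(z_0)|\ge\log(c\rho_0)$, the number of zeros in $D(z_0,s)$ is at most $(\log 2)^{-1}$ times an expression of size $C(1+s)$. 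Since $\rho_0$ depends only on $Q,A$, taking $s=r+3\rho_0$ makes $D(z_0,s)\supset\{|z|<r\}$. This gives $N_a(r)\le C(1+r)$ for $r\ge1$, which is \eqref{eq:zero-count}. The shifted centre is needed because $F_a(0)$ may vanish or be arbitrarily small.

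\emph{Tail product.} For $|w|\le 1/2$,
\[
\log E(w)=-\sum_{k\ge2}w^k/k, \qquad\text{so}\qquad |\log E(w)|\le |w|^2.
\]
For $|z|\le R/2$ and $|z_n|\ge R$ we have $|z/z_n|\le1/2$, hence
\[
\Bigl|\log\Pi_a(R,z)\Bigr|\le |z|^2\sum_{|z_n|\ge R}|z_n|^{-2}.
\]
Stieltjes integration by parts with \eqref{eq:zero-count} gives
\[
\sum_{|z_n|\ge R}|z_n|^{-2}=\int_R^\infty t^{-2}\,dN_a(t)\le 2\int_R^\infty N_a(t)t^{-3}\,dt\le \frac{C}{R},
\]
for $R\ge1$; the boundary term at infinity vanishes by the linear growth of $N_a$. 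Writing $L=\log\Pi_a(R,z)$, the inequality $|e^{L}-1|\le|L|e^{|L|}$ then yields \eqref{eq:tailproduct}. The constant $R_0$ only ensures $R\ge1$, so that the integral bound applies. For $R<1$ the same argument works with $C$ enlarged using the finitely many zeros of modulus less than $1$, but restricting to $R\ge R_0$ is cleaner.

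\emph{Main obstacle.} The delicate point is uniformity in $q\in B(Q)$ and $|a|\le A$ in the Jensen step. Everything there depends on the lower bound at $z_0$, which the normalized estimate \eqref{eq:normalizedbound} supplies uniformly. The rest is routine.
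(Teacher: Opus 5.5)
Your proposal is correct and takes the same route as the paper. The paper cites the Jensen-formula argument of the Dirichlet paper for the linear zero count, then bounds the tail with the genus-one estimate for $E(w)$, $|w|\le 1/2$, summed against $\sum_{|z_n|\ge R}|z_n|^{-2}\le C/R$; you supply the details it delegates, in particular centring Jensen's formula at $3i\rho_0$, where \eqref{eq:normalizedbound} gives a lower bound for $|F_a|$ uniform in $q$ and $a$.

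One correction to your aside about $R<1$. There the tail bound cannot hold with $C=C(Q,A)$, because nonzero zeros may approach the origin. For $q=0$, $F_a(z)=iz-a$ has the zero $-ia$; taking $R=|a|$ and $z=-ia/2$ gives $|\Pi_a(R,z)-1|=1-\tfrac12 e^{1/2}$, while the right-hand side of \eqref{eq:tailproduct} tends to $0$ as $a\to0$. So the restriction $R\ge R_0$ is genuinely needed, not merely cleaner.
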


\begin{proof}
The proof is the Jensen-formula argument of \cite[Lemma 3.2 and
the subsequent canonical-product estimates]{MSW}.  The factor $iz-a$
contributes only logarithmically to the maximum-modulus estimates.
For $|a|\le A$, this changes the constants but not the order or the
linear zero-counting bound.  The estimate \eqref{eq:tailproduct} then
follows from the standard estimate
$|E(w)-1|\le C|w|^2$ for $|w|\le1/2$.
\end{proof}

\section{Finite-data stability for the Robin characteristic function}

We now prove the direct analogue of the finite-data estimate in
\cite[Theorem 3.1]{MSW}.

\begin{theorem}
Let $Q,A>0$.  There exist constants $C=C(Q,A)$ and
$R_0=R_0(Q,A)$ with the following property.

Let $q,\widetilde q\in B(Q)$ and $a\in\mathbb C$, $|a|\le A$.  Suppose
that the zeros of
\[
 F_a(z;q)
 \quad\text{and}\quad
 F_a(z;\widetilde q)
\]
coincide in $|z|<R$, including multiplicities.  Then, for $R\ge R_0$,
\begin{equation}
 \boxed{
 \sup_{|x|\le R^{1/3}}
 \frac{|F_a(x;q)-F_a(x;\widetilde q)|}{1+|x|}
 \le C R^{-1/3}.
 }
 \label{eq:finite-stability}
\end{equation}
\end{theorem}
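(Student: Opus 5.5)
Since $F_a(\cdot;q)$ and $F_a(\cdot;\widetilde q)$ share their zeros in $|z|<R$, the plan is to write both through the Hadamard factorization \eqref{eq:hadamard} and compare them. Put
\[
P_R(z)=z^{m_a}\prod_{0<|z_n|<R}E\Big(\frac z{z_n}\Big),
\]
which is the common finite part. Then
\[
F_a(z;q)=P_R(z)\,\e^{g_a(z)}\,\Pi_a(R,z),\qquad
F_a(z;\widetilde q)=P_R(z)\,\e^{\widetilde g_a(z)}\,\widetilde\Pi_a(R,z).
\]
Hence the ratio
\[
\Phi(z)=\frac{F_a(z;\widetilde q)}{F_a(z;q)}=\e^{(\widetilde c_1-c_1)z+(\widetilde c_0-c_0)}\,\frac{\widetilde\Pi_a(R,z)}{\Pi_a(R,z)}
\]
is free of the unknown small zeros. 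By \eqref{eq:tailproduct}, for $|z|\le R^{1/3}$ both tail products equal $1+O(R^{-1/3})$, uniformly in $q,\widetilde q\in B(Q)$, $|a|\le A$. So the whole problem reduces to showing that the exponential factor is close to $1$ on the disc $|z|\le R^{1/3}$.

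\textbf{Controlling the linear exponent.} The approach is to evaluate $\Phi$ where \eqref{eq:normalizedbound} applies, namely on the discs $|z-3i\rho|\le\rho$. There
\[
\Phi(z)=\frac{1+O(\rho^{-1})}{1+O(\rho^{-1})}=1+O(\rho^{-1}).
\]
The point is that the normalization $iz-a$ cancels in the ratio, so the Robin feature $F_a\sim iz-a$ costs nothing here. Take $\rho\asymp R^{1/3}$; the disc $|z-3i\rho|\le\rho$ then lies in $|z|\le 4\rho\le R/2$ once $R\ge R_0$, so \eqref{eq:tailproduct} is still available with error $O(\rho^2/R)=O(R^{-1/3})$. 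This gives
\[
\exp\big((\widetilde c_1-c_1)z+\widetilde c_0-c_0\big)=1+O(R^{-1/3})
\]
on that disc. Taking a logarithm, choosing the branch by continuity and using that the disc is connected, the linear function $h(z)=(\widetilde c_1-c_1)z+(\widetilde c_0-c_0)$ satisfies $h(z)=2\pi i k+O(R^{-1/3})$ on a disc of radius $\rho$. Comparing values at the center and on the boundary gives $|\widetilde c_1-c_1|\le C R^{-1/3}/\rho$, and $\widetilde c_0-c_0\in 2\pi i\mathbb Z+O(R^{-1/3})$. The integer ambiguity is harmless, since only $\e^{h}$ enters. Consequently, for $|z|\le R^{1/3}$,
\[
|h(z)-2\pi i k|\le CR^{-1/3}+CR^{-1/3}\,\frac{|z|}{\rho}\le C' R^{-1/3},
\]
and therefore $\Phi(z)=1+O(R^{-1/3})$ on the whole disc $|z|\le R^{1/3}$.

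\textbf{Conclusion on the real axis.} The key identity is
\[
F_a(x;\widetilde q)-F_a(x;q)=F_a(x;q)\,(\Phi(x)-1).
\]
The poles of $\Phi$ at zeros of $F_a(\cdot;q)$ cancel against the factor $F_a(x;q)$; equivalently, one writes the difference directly as $P_R(x)\big(\e^{\widetilde g}\widetilde\Pi-\e^{g}\Pi\big)$, and the bounds above apply to each factor. Combining $|F_a(x;q)|\le C(1+|x|)$ from \eqref{eq:realbound} with $|\Phi(x)-1|\le CR^{-1/3}$ for real $|x|\le R^{1/3}$ yields \eqref{eq:finite-stability}.

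\textbf{Expected obstacle.} The main difficulty is the step of passing from ``$\e^{h}\approx1$ on a disc around $3i\rho$'' to ``$\e^{h}\approx1$ on all of $|z|\le R^{1/3}$'', which includes the real axis. The rate must not be lost when extrapolating a linear function from a disc of radius $\rho$ to a disc of comparable radius, and the logarithm branch must be handled carefully. This is where the choice $\rho\sim R^{1/3}$ is forced: it balances the error $\rho^{-1}$ from \eqref{eq:normalizedbound} against the error $\rho^2/R$ from \eqref{eq:tailproduct}. If the direct extrapolation loses too much, the fallback is to apply the maximum principle to $\e^{h}-1$ on the disc of radius $4\rho$.
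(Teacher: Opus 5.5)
Your proposal is correct and follows essentially the same route as the paper. Both arguments cancel the common finite Hadamard part in the ratio, then apply the normalized high-energy bound and the tail-product bound on the disc $|z-3i\rho|\le\rho$ with $\rho\asymp R^{1/3}$, extrapolate the affine exponent to $|z|\le R^{1/3}$, and conclude with $|F_a(x;q)|\le C(1+|x|)$ on the real axis. Your explicit handling of the $2\pi i k$ ambiguity in the logarithm is slightly more careful than the paper's appeal to ``the normalization at infinity'' to fix the branch.
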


\begin{proof}
Write the Hadamard factorizations in the form
\[
 F_a(z;q)=z^{m}e^{g(z)}P_{<R}(z)P_{\ge R}(z),
 \qquad
 F_a(z;\widetilde q)=z^{m}e^{\widetilde g(z)}
 \widetilde P_{<R}(z)\widetilde P_{\ge R}(z),
\]
where the common zeros in $|z|<R$, including their multiplicities, have
been collected in $P_{<R}=\widetilde P_{<R}$ and
$g-\widetilde g$ is affine.  The integer $m$ is the common multiplicity of
any zero at the origin.  Notice that the free factor $iz-a$ is not inserted
into either Hadamard product; it is used only for the high-energy
normalization in \eqref{eq:normalizedbound}.

Consequently,
\[
 \frac{F_a(z;q)}{F_a(z;\widetilde q)}
 =e^{g(z)-\widetilde g(z)}
   \frac{P_{\ge R}(z)}{\widetilde P_{\ge R}(z)}
\]
whenever the denominator is nonzero.  On
$|z-3i\rho|\le\rho$, the normalized estimate
\eqref{eq:normalizedbound} gives
\[
 \frac{F_a(z;q)}{F_a(z;\widetilde q)}=1+O(\rho^{-1}),
\]
while the tail-product estimate gives
\[
 P_{\ge R}(z)=1+O(\rho^2/R),\qquad
 \widetilde P_{\ge R}(z)=1+O(\rho^2/R),
\]
provided $\rho\le R/4$.  For $R$ large these estimates imply
$|\widetilde P_{\ge R}(z)|\ge1/2$, so division is legitimate and the
same bound holds for their quotient.  Hence
\[
 e^{g(z)-\widetilde g(z)}-1
 =O\left(\rho^{-1}+\frac{\rho^2}{R}\right)
\]
on that disc.  Put $h=g-\widetilde g=c_1z+c_0$.  On the high-imaginary disc the
quotient is nonzero and is $1+O(\rho^{-1})$, so a single analytic branch of
$\log$ may be chosen there.  The normalization at infinity fixes the branch,
and hence $h(z)=O(\rho^{-1})$ on that disc.  Since $h$ is affine, evaluating
at two points of the disc gives $|c_1|\rho+|c_0|\le C\rho^{-1}$, and
therefore $h(z)=O(\rho^{-1})$ throughout $|z|\le\rho$.

Choose $\rho=R^{1/3}$.  Then
\[
 \rho^{-1}+\frac{\rho^2}{R}\le C R^{-1/3}.
\]
The preceding estimates therefore give
\[
 \frac{F_a(x;q)}{F_a(x;\widetilde q)}=1+O(R^{-1/3})
\]
for real $|x|\le R^{1/3}$, away from the finitely many common zeros; the
identity extends across those zeros by analyticity.  Hence, using
\eqref{eq:realbound},
\[
 |F_a(x;q)-F_a(x;\widetilde q)|
 \le C(1+|x|)R^{-1/3},
 \qquad |x|\le R^{1/3},
\]
which is \eqref{eq:finite-stability}.  This argument uses $iz-a$ only through the
normalization estimate and never treats $iz-a$ as a zero factor of $F_a$.
\end{proof}

\begin{remark}
The factor $(1+|x|)$ in \eqref{eq:finite-stability} is natural and should
not be omitted: the Robin characteristic function has linear high-energy
growth, whereas the Dirichlet Jost function is asymptotic to $1$.
\end{remark}

\section{Transformation kernels and the Robin boundary trace}

Let $\psi$ and $\widetilde\psi$ correspond to $q$ and $\widetilde q$.
As in \cite[Section 2]{MSW}, there is a transformation kernel $B$ such
that
\begin{equation}
 \widetilde\psi(z,x)
 =
 \psi(z,x)
 +\int_x^{2-x}B(x,t)\psi(z,t)\,dt.
 \label{eq:Btrans}
\end{equation}
It satisfies the diagonal identity
\begin{equation}
 \boxed{
 2B(x,x)
 =
 \int_x^1(\widetilde q(s)-q(s))\,ds.
 }
 \label{eq:diagonal}
\end{equation}

Differentiating \eqref{eq:Btrans} at $x=0$ and subtracting $a$ times
the value at $x=0$ gives
\begin{equation}
 \boxed{
 F_a(z;\widetilde q)-F_a(z;q)
 =
 -B(0,0)\psi(z,0)
 +\int_0^2 C_a(t)\psi(z,t)\,dt,
 }
 \label{eq:RobinB}
\end{equation}
where
\begin{equation}
 \boxed{
 C_a(t):=B_x(0,t)-aB(0,t).
 }
 \label{eq:Ca}
\end{equation}

Equation \eqref{eq:RobinB} is the precise modification of the boundary
identity used in the Dirichlet paper.  In particular, it is not sufficient
simply to replace $\psi(z,0)$ by $F_a(z)$ in the Fourier argument.

The kernel $B(0,t)$ is related to the original transformation kernels by
\begin{equation}
 B(0,t)
 =
 K_{\widetilde q}(0,t)-K_q(0,t)
 +
 \int_0^t
 \bigl(K_{\widetilde q}(0,s)-K_q(0,s)\bigr)L_q(s,t)\,ds,
 \label{eq:B0}
\end{equation}
as in \cite{MSW}.

% ------------------------------------------------------------------
% Corrected stability chain.
% The factor (1+|z|) in the finite-frequency estimate is retained.
% The cutoff is R^\alpha with alpha=p/[3(3p-1)].
% ------------------------------------------------------------------

\section{Fourier estimate}

The linear high-energy growth of the Robin characteristic function must
be retained in the Fourier argument.  Put
\[
D_a(z):=F_a(z;q)-F_a(z;\widetilde q).
\]
By \eqref{eq:finite-stability},
\begin{equation}
 |D_a(z)|
 \le C(1+|z|)R^{-1/3},
 \qquad |z|\le R^{1/3}.
 \label{eq:Robin-low-frequency}
\end{equation}

For \(1<p\le2\), define
\begin{equation}
 \alpha=\frac{p}{3(3p-1)},
 \qquad
 \beta=\frac{\alpha(p-1)}{p}
       =\frac{p-1}{3(3p-1)}.
 \label{eq:alpha-beta}
\end{equation}
The choice of \(\alpha\) is dictated by the factor \(1+|z|\) in
\eqref{eq:Robin-low-frequency}.  Indeed,
\[
-\frac13+2\alpha=-\beta.
\]

\begin{lemma}[Robin Fourier decomposition]
\label{lem:Robin-Fourier-decomp}
There is a function $\eta_a\in L^p(0,2)$, depending on $q,\widetilde q$
(and on $a$), such that, for real $z$,
\begin{equation}
 D_a(z)=B(0,0)+\int_0^2 \eta_a(t)e^{izt}\,dt,
 \label{eq:Robin-Fourier-decomp}
\end{equation}
where $D_a(z)=F_a(z;q)-F_a(z;\widetilde q)$ and
\begin{equation}
 \|\eta_a\|_{L^p(0,2)}
 \le C(Q,A,p)\bigl(1+\|\widetilde q-q\|_{L^p(0,1)}\bigr).
 \label{eq:Robin-ha-bound}
\end{equation}
Moreover,
\begin{equation}
 2B(0,0)=\int_0^1(\widetilde q(s)-q(s))\,ds.
 \label{eq:B00-integral}
\end{equation}
\end{lemma}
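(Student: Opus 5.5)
The plan is to start from the boundary identity \eqref{eq:RobinB}. Since $D_a=F_a(\cdot;q)-F_a(\cdot;\widetilde q)$ is the negative of its left-hand side, we get
\[
D_a(z)=B(0,0)\psi(z,0)-\int_0^2 C_a(t)\psi(z,t)\,dt .
\]
Into this I substitute the transformation representations \eqref{eq:psi0} and \eqref{eq:transformation}. The first gives $\psi(z,0)=1+\int_0^2K_q(0,t)e^{izt}\,dt$. The second gives $\psi(z,t)=e^{izt}+\int_t^{2-t}K_q(t,s)e^{izs}\,ds$ for $0\le t\le1$, and $\psi(z,t)=e^{izt}$ for $t\ge1$, since $q$ is supported in $[0,1]$.

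Interchanging the order of integration in the double integral by Fubini (all kernels are integrable on the bounded triangle $\{0\le t\le s\le 2-t\}$) produces exactly the form \eqref{eq:Robin-Fourier-decomp}, with
\[
\eta_a(t)=B(0,0)K_q(0,t)-C_a(t)-\int_0^{\min(t,2-t)}C_a(s)K_q(s,t)\,ds,\qquad 0<t<2 .
\]
The identity \eqref{eq:B00-integral} is simply \eqref{eq:diagonal} evaluated at $x=0$, so it needs no further argument.

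It remains to prove \eqref{eq:Robin-ha-bound}. I will use the standard kernel bounds from \cite[Section 2]{MSW}: $|K_q(x,t)|\le C(Q)$ on its triangle, and $|B(x,t)|\le C(Q)$, which follows from \eqref{eq:B0} and the analogous bound on $L_q$. Together with $|B(0,0)|\le Q$ from \eqref{eq:diagonal} and $|a|\le A$, these bound the terms $B(0,0)K_q(0,\cdot)$ and $aB(0,\cdot)$ in $L^\infty(0,2)$, hence in $L^p$, by $C(Q,A)$. By Young/Minkowski the integral term has $L^p$ norm at most $C(Q)\|C_a\|_{L^1}\le C(Q)\|C_a\|_{L^p}$. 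So everything reduces to the single estimate
\[
\|B_x(0,\cdot)\|_{L^p(0,2)}\le C(Q,p)\bigl(1+\|\widetilde q-q\|_{L^p(0,1)}\bigr).
\]
If $\widetilde q-q\notin L^p$, the right-hand side is infinite and there is nothing to prove.

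This derivative bound is the main obstacle, since $B_x$ is not controlled by the $L^1$ data alone: it carries the pointwise values of $\widetilde q-q$. I will write the Goursat problem for $B$, namely $B_{xx}-B_{tt}=(\widetilde q(x)-q(t))B$ together with the diagonal condition \eqref{eq:diagonal} and vanishing on the other characteristic $t=2-x$. In characteristic coordinates $u=t+x$, $v=t-x$ this becomes an integral equation
\[
B=B^{(0)}+\iint(\cdots)B ,
\]
where the free term $B^{(0)}$ is built from $\int(\widetilde q-q)$. Differentiating in $x$ and setting $x=0$ gives
\[
B_x(0,t)=c\,(\widetilde q-q)(t/2)+R(t),
\]
with $c$ an absolute constant. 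The remainder $R(t)$ consists of one-dimensional integrals along characteristics of $q$, $\widetilde q$ times $B$, plus a double integral involving $B_x$; using $|B|\le C(Q)$, the first kind is bounded in $L^\infty$ by $C(Q)$. For the term with $B_x$, I will run a Gronwall argument on $\sup_x\|B_x(x,\cdot)\|_{L^p}$, exactly as in the Volterra analysis of \cite{MSW}. This gives $\|R\|_{L^p}\le C(Q,p)\bigl(1+\|\widetilde q-q\|_{L^p}\bigr)$, while the leading term has $L^p$ norm $2^{1/p}|c|\,\|\widetilde q-q\|_{L^p}$. This yields the required bound on $C_a$, and hence \eqref{eq:Robin-ha-bound}.
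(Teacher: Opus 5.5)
Your proof is correct, but it takes a different route from the paper.

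\textbf{The paper's route.} The paper does not pass through \eqref{eq:RobinB} for this lemma. It subtracts the two individual Jost representations, $F_a(z;q)=iz-a-K_q(0,0)+\int_0^2\bigl(K_{q,x}(0,t)-aK_q(0,t)\bigr)e^{izt}\,dt$ and its analogue for $\widetilde q$. This gives $\eta_a=K_{q,x}(0,\cdot)-K_{\widetilde q,x}(0,\cdot)-a\bigl(K_q(0,\cdot)-K_{\widetilde q}(0,\cdot)\bigr)$, and $B(0,0)$ enters only as $K_{\widetilde q}(0,0)-K_q(0,0)$. The $L^p$ bound then comes from the characteristic-coordinate formula for each single kernel, $K_{q,x}(0,t)=-\tfrac14 q(t/2)+(\text{line integrals of }qK_q)$. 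The leading parts for $q$ and $\widetilde q$ combine to $\tfrac14(\widetilde q-q)(t/2)$.

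\textbf{Your route.} You substitute the $q$-representation into the relative identity \eqref{eq:RobinB}. The $\eta_a$ you obtain is exactly the relation \eqref{eq:Robin-eta-c-volterra}, which the paper only derives later, in the section on recovering the boundary trace. Your upper limit $\min(t,2-t)$ is the accurate one. By uniqueness of the Fourier transform, your $\eta_a$ coincides with the paper's.

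\textbf{What each buys.} The paper's route needs only single-potential kernel theory and no regularity of the relative kernel beyond $B(0,0)$. Yours needs more input: $B(x,2-x)=0$, an $L^p$ trace $B_x(0,\cdot)$, and a bound $|B|\le C(Q)$ on the whole triangle. Note that \eqref{eq:B0} gives this bound only on $x=0$; the interior bound needs Volterra iteration of your Goursat equation. In exchange, you obtain the $\eta_a$--$c_a$ relation at the same time, and your leading term involves only $\widetilde q-q$ from the outset.

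\textbf{One simplification.} The Gronwall step on $B_x$ is unnecessary. In the diagonal Goursat form
\[
 B(x,t)=\frac12\int_{(x+t)/2}^1(\widetilde q-q)(s)\,ds
 +\int_{(x+t)/2}^1\int_0^{(t-x)/2}\bigl(\widetilde q(s-v)-q(s+v)\bigr)B(s-v,s+v)\,dv\,ds
\]
the integrand does not depend on $(x,t)$, so $\partial_x$ falls only on the integration limits. This gives $B_x(0,t)=-\tfrac14(\widetilde q-q)(t/2)$ plus two line integrals bounded by $C(Q)$. No double integral containing $B_x$ arises, and your constant is $c=-\tfrac14$.
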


\begin{proof}
We use the transformation representation with $q$ as the reference
potential.  Thus $\psi$ denotes the Jost solution for $q$, while
$\widetilde\psi$ is obtained from $\psi$ by the transformation kernel
$B$:
\begin{equation}
 \widetilde\psi(z,x)=\psi(z,x)
 +\int_x^{2-x}B(x,t)\psi(z,t)\,dt.
 \label{eq:Robin-Fourier-proof-B}
\end{equation}
At $x=0$, differentiation of this identity gives the boundary relation
\eqref{eq:RobinB}.  The important point is that the first term in
\eqref{eq:RobinB} cannot be discarded: it is the scalar
$-B(0,0)\psi(z,0)$.

For the Fourier decomposition it is preferable to return to the
transformation representation for the individual Jost solutions.  At
$x=0$ we have
\[
 \psi(z,0)=1+\int_0^2K_q(0,t)e^{izt}\,dt,
\]
and, in the weak sense sufficient for the Fourier inversion argument,
\[
 \psi_x(z,0)=iz-K_q(0,0)
 +\int_0^2K_{q,x}(0,t)e^{izt}\,dt.
\]
The corresponding identities hold for $\widetilde q$.  Hence
\begin{align}
 D_a(z)
 &=B(0,0)
   +\int_0^2 \eta_a(t)e^{izt}\,dt,
 \label{eq:Robin-D-decomp}
\end{align}
where, with the signs determined by the definition
$D_a=F_a(q)-F_a(\widetilde q)$,
\begin{equation}
 \eta_a(t)=
 K_{q,x}(0,t)-K_{\widetilde q,x}(0,t)
 -a\bigl(K_q(0,t)-K_{\widetilde q}(0,t)\bigr).
 \label{eq:Robin-ha-explicit}
\end{equation}
Here we have used
$K_{\widetilde q}(0,0)-K_q(0,0)=B(0,0)$, which follows from
\eqref{eq:B0} by putting $t=0$.

We prove the required trace estimate directly from the characteristic-coordinate
Volterra representation of the transformation kernel.  This avoids any
appeal to a nonexistent pointwise derivative of $q$.  Put

\[
 K_q(x,t)=K_{0,q}(x,t)+H_q(x,t),
 \qquad
 K_{0,q}(x,t)=\frac12\int_{(x+t)/2}^{1}q(s)\,ds .
\]
In the characteristic coordinates
\[
 \xi=\frac{x+t}{2},\qquad \eta=\frac{t-x}{2},
\]
the remainder satisfies
\[
 (H_q)_{\xi\eta}(\xi,\eta)
 =-q(\xi-\eta)K_q(\xi-\eta,\xi+\eta),
 \qquad H_q(\xi,0)=H_q(1,\eta)=0.
\]
Consequently,
\begin{equation}
 H_q(\xi,\eta)
 =\int_\xi^1\int_0^\eta
 q(s-v)K_q(s-v,s+v)\,dv\,ds .
 \label{eq:H-characteristic-representation}
\end{equation}
The usual Volterra iteration gives
\begin{equation}
 \sup_{0\le x\le1,\ x\le t\le2-x}|K_q(x,t)|
 \le C(Q).
 \label{eq:K-uniform-bound}
\end{equation}
Differentiating \eqref{eq:H-characteristic-representation} with respect to
its integration limits is legitimate for almost every $t$: the resulting
boundary traces are absolutely continuous in $t$, and all integrands are
locally integrable because $q\in L^1$ and $K_q$ is bounded.  Since
$x=\xi-\eta$ and $t=\xi+\eta$, we have
$\partial_x=\frac12(\partial_\xi-\partial_\eta)$.  Thus, on the boundary
$x=0$ (so $\xi=\eta=t/2$),
\begin{align}
 H_{q,x}(0,t)
 ={}&-\frac12\int_0^{t/2}
 q(t/2-v)K_q(t/2-v,t/2+v)\,dv \\
 &-\frac12\int_{t/2}^{1}
 q(s-t/2)K_q(s-t/2,s+t/2)\,ds .
 \label{eq:Hx-explicit}
\end{align}
In particular, the first term in $K_q$ is a genuine primitive of $q$:
\[
 K_{0,q}(x,t)=\frac12\int_{(x+t)/2}^{1}q(s)\,ds,
 \qquad
 K_{0,q,x}(0,t)=-\frac14q(t/2)\quad\text{a.e.}
\]
Thus there is no differentiability assumption on $q$ beyond $q\in L^1$.
Using \eqref{eq:K-uniform-bound} and Minkowski's integral inequality (or,
equivalently, the boundedness of the corresponding Volterra operators),
we obtain
\begin{equation}
 \|H_{q,x}(0,\cdot)\|_{L^p(0,2)}
 \le C(Q,p)\|q\|_{L^p(0,1)}
 \le C(Q,p)Q,
 \label{eq:Hx-Lp}
\end{equation}
for $1\le p\le\infty$.  More importantly, subtracting the corresponding
formulae for $q$ and $\widetilde q$ gives
\begin{equation}
 \|H_{\widetilde q,x}(0,\cdot)-H_{q,x}(0,\cdot)\|_{L^p(0,2)}
 \le C(Q,p)\|\widetilde q-q\|_{L^p(0,1)},
 \label{eq:Hx-difference-Lp}
\end{equation}
where we used, in addition to \eqref{eq:K-uniform-bound}, the standard
Volterra estimate
\[
 \|K_{\widetilde q}-K_q\|_{L^\infty(\triangle)}
 \le C(Q)\|\widetilde q-q\|_{L^1(0,1)}.
\]
Indeed, after subtraction each integrand is a sum of a term containing
$\widetilde q-q$ times one bounded transformation kernel and a term
containing $q$ times $K_{\widetilde q}-K_q$; the displayed estimate and
$\|\widetilde q-q\|_1\le\|\widetilde q-q\|_p$ give the stated bound.
Finally,
\[
 K_{q,x}(0,t)= -\frac14q(t/2)+H_{q,x}(0,t),
\]
so \eqref{eq:Hx-Lp} and \eqref{eq:Hx-difference-Lp} imply
\begin{align}
 \|K_{q,x}(0,\cdot)\|_{L^p(0,2)}&\le C(Q,p),\\
 \|K_{\widetilde q,x}(0,\cdot)-K_{q,x}(0,\cdot)\|_{L^p(0,2)}
 &\le C(Q,p)\|\widetilde q-q\|_{L^p(0,1)}.
 \label{eq:Robin-Kx-trace-estimate}
\end{align}
The $L^p$ estimate for the undifferentiated trace follows in the same way
from the representation of $K_q$ and is bounded by $C(Q)$.  Therefore
\eqref{eq:Robin-ha-explicit} satisfies
\eqref{eq:Robin-ha-bound}, with the slightly stronger dependence
$C(Q,A,p)(1+\|\widetilde q-q\|_p)$.

Finally, the diagonal identity for the transformation kernel gives
\[
 2B(x,x)=\int_x^1(\widetilde q(s)-q(s))\,ds,
\]
and $x=0$ gives \eqref{eq:B00-integral}.
\end{proof}

\begin{theorem}[Robin scalar boundary estimate]
\label{thm:Robin-Fourier}
Let $q,\widetilde q\in B(Q)$, let
$\widetilde q-q\in L^p(0,1)$, $1<p\le2$, and let $|a|\le A$.
Suppose that the zeros of $F_a(z;q)$ and
$F_a(z;\widetilde q)$ coincide in $|z|<R$, including multiplicities.
Then, with
\[
\beta=\frac{p-1}{3(3p-1)},
\]
one has
\begin{equation}
 |B(0,0)|
 \le C(Q,A,p)
 \bigl(1+\|\widetilde q-q\|_p\bigr)R^{-\beta}.
 \label{eq:corrected-B00}
\end{equation}
In particular, the complete zero data control the scalar boundary
contribution $B(0,0)$ at the rate $R^{-\beta}$.
\end{theorem}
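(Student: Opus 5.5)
We plan to isolate the constant $B(0,0)$ in the decomposition of Lemma~\ref{lem:Robin-Fourier-decomp} by averaging $D_a$ against a mollifier that is very short in $t$ and hence spread out in frequency. Fix an even Schwartz function $\phi$ with $\widehat\phi$ supported in $[-1,1]$ and $\int_{\mathbb R}\phi=1$. For a large parameter $L$ (to be chosen as a power of $R$) put $\phi_L(x)=L^{-1}\phi(x/L)$, and integrate \eqref{eq:Robin-Fourier-decomp} against $\phi_L$ over real $x$:
\[
 \int_{\mathbb R} D_a(x)\phi_L(x)\,dx
 = B(0,0)+\int_0^2\eta_a(t)\,\widehat{\phi}(Lt)\,dt ,
\]
with $\widehat\phi(s)=\int\phi(x)e^{ixs}dx$. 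The second term only involves $t\in[0,1/L]$. By H\"older's inequality and \eqref{eq:Robin-ha-bound} it is at most $\|\eta_a\|_p\,\|\widehat\phi(L\cdot)\|_{p'}\le C(1+\|\widetilde q-q\|_p)L^{-(p-1)/p}$. This is where $p>1$ is used, and the loss $(p-1)/p$ is exactly the ratio $\beta/\alpha$ in \eqref{eq:alpha-beta}.

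For the left side we will split the $x$-integral at $|x|=R^{1/3}$. On $|x|\le R^{1/3}$ we use the low-frequency bound \eqref{eq:Robin-low-frequency}, which gives a contribution $\le CR^{-1/3}\int(1+|x|)|\phi_L(x)|dx\le CR^{-1/3}L$. On $|x|>R^{1/3}$ we use the crude bound $|D_a(x)|\le C(1+|x|)$ from \eqref{eq:realbound}, applied to both $q$ and $\widetilde q$, together with the Schwartz decay of $\phi$. This contribution is then $O_N\bigl(L(L/R^{1/3})^{N}\bigr)$, which is negligible provided $L\le R^{1/3-\epsilon}$. Altogether we obtain
\[
|B(0,0)|\le C\bigl(1+\|\widetilde q-q\|_p\bigr)\bigl(L^{-(p-1)/p}+R^{-1/3}L+\text{tail}\bigr).
\]

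The remaining step is to choose $L=R^{\alpha}$ with the paper's $\alpha=p/(3(3p-1))$. Then $L^{-(p-1)/p}=R^{-\beta}$, and $R^{-1/3}L=R^{-1/3+\alpha}$. Since $\alpha<1/6$, this is at most $R^{-1/3+2\alpha}=R^{-\beta}$, which is the identity $-\frac13+2\alpha=-\beta$ recorded before the lemma. So the paper's exponent is admissible, though our split is slightly lossy compared to it. The tail is superpolynomially small because $\alpha<1/3$. This yields \eqref{eq:corrected-B00}.

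The main obstacle we expect is the tail. $D_a$ grows linearly on the real axis, and $\phi_L$ must have Fourier support of width $\sim1/L$ so that it sees only $t\in[0,1/L]$. A compactly supported $\phi$ in $x$ would be cleaner for the tail, but then $\widehat\phi$ is not compactly supported and $\eta_a$ is sampled on all of $[0,2]$. We would resolve this with the Schwartz/Paley--Wiener choice above. The fallback is to take $\phi$ compactly supported in $x$ and bound $\int_0^2|\eta_a||\widehat\phi(Lt)|\,dt$ using the rapid decay of $\widehat\phi$, which gives the same $L^{-(p-1)/p}$ rate.
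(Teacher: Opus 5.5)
Your proof is correct and is essentially the paper's argument. Both average the decomposition \eqref{eq:Robin-Fourier-decomp} against a rescaled bump at scale $r=R^\alpha$. Both bound the $D_a$ part by the finite-data estimate, giving $CR^{-1/3+\alpha}$, and bound the $\eta_a$ part by H\"older together with the scaling $\|\widehat\chi(r\,\cdot)\|_{p'}=r^{-1/p'}\|\widehat\chi\|_{p'}$.

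The only difference is the test function. The paper takes $\chi\in C_c^\infty((-2,2))$ in the frequency variable, which is your ``fallback''. Then $\chi_r$ is supported in $|z|\le 2r\le R^{1/3}$ and no tail estimate is needed. Your Paley--Wiener choice instead localizes $\eta_a$ to $[0,1/L]$, at the price of a harmless tail bound via \eqref{eq:realbound}.
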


\begin{proof}
The proof has an additional step compared with the Dirichlet argument
in \cite[Section 4]{MSW}.  In the Dirichlet case, equation (11) of
that paper is a direct inverse Fourier transform for
$K_{\widetilde q}(0,t)-K_q(0,t)$.  In the Robin case this is no longer
true because the boundary functional contains a derivative.  The
extra scalar term $B(0,0)\psi(z,0)$ in \eqref{eq:RobinB} must therefore
be isolated before any Fourier inversion is used.

Set
\[
 r=R^\alpha,
 \qquad
 \alpha=\frac{p}{3(3p-1)},
 \qquad
 \beta=\frac{\alpha(p-1)}p
       =\frac{p-1}{3(3p-1)}.
\]
Let $\chi\in C_c^\infty((-2,2))$ be nonnegative, with
$\chi=1$ on $[-1,1]$, and put
\[
 \chi_r(z)=\frac1r\chi\left(\frac zr\right).
\]
Since $\int\chi_r=\int\chi$, equation
\eqref{eq:Robin-Fourier-decomp} gives
\begin{equation}
 B(0,0)
 =\frac1{\int\chi}
   \int_{\mathbb R}\chi_r(z)D_a(z)\,dz
 -\frac1{\int\chi}
   \int_0^2\eta_a(t)\widehat{\chi_r}(-t)\,dt.
 \label{eq:B00-average}
\end{equation}
Because $\chi_r$ is supported in $|z|\le2r$ and $r\ll R^{1/3}$,
the finite-data estimate \eqref{eq:Robin-low-frequency} yields
\begin{align}
 \left|\int\chi_r(z)D_a(z)\,dz\right|
 &\le C R^{-1/3}
   \int_{|z|\le2r}(1+|z|)|\chi_r(z)|\,dz \\
 &\le C R^{-1/3}(1+r)
 \le C R^{-1/3+\alpha}.
 \label{eq:B00-low}
\end{align}
Since $-1/3+\alpha<-\beta$, this contribution is
$O(R^{-\beta})$.

For the second term in \eqref{eq:B00-average}, Hausdorff--Young and
scaling give, with $p'=p/(p-1)$,
\[
 \|\widehat{\chi_r}\|_{L^{p'}(\mathbb R)}
 \le C r^{-1/p'}.
\]
Consequently,
\begin{align}
 \left|\int_0^2\eta_a(t)\widehat{\chi_r}(-t)\,dt\right|
 &\le C\|\eta_a\|_p r^{-1/p'} \\
 &\le C(Q,A,p)
    \bigl(1+\|\widetilde q-q\|_p\bigr)r^{-1/p'}.
 \label{eq:B00-HY}
\end{align}
Since $r^{-1/p'}=R^{-\beta}$, \eqref{eq:corrected-B00} follows.
\end{proof}

\section{Recovery of the Robin boundary trace from the boundary combination}

The Fourier calculation in the preceding section produces the kernel-trace
combination
\[
 \eta_a(t):=K_{q,x}(0,t)-K_{\widetilde q,x}(0,t)
 -a\bigl(K_q(0,t)-K_{\widetilde q}(0,t)\bigr),
\]
not directly the Robin boundary combination
$C_a(t)=B_x(0,t)-aB(0,t)$.  We first pass from $\eta_a$ to $C_a$ by the
Volterra resolvent associated with the transformation operator, and then
recover $B(0,t)$ from the hyperbolic boundary-value problem.  This is the
only modification of the propagation argument in \cite[Section 5]{MSW} that
is forced by the Robin boundary condition.

As in \cite[(21)--(23)]{MSW}, the transformation kernel satisfies
\begin{equation}
 B(x,2-x)=0,\qquad 0\le x\le1,
 \label{eq:Robin-characteristic-boundary}
\end{equation}
and the characteristic representation
\begin{equation}
 B(x,t)=B(0,x+t)+
 \int_{(x+t)/2}^{1}d\alpha
 \int_{(t-x)/2}^{(t+x)/2}
 \bigl(q(\alpha+\beta)-\widetilde q(\alpha-\beta)\bigr)
 B(\alpha-\beta,\alpha+\beta)\,d\beta .
 \label{eq:Robin-Volterra}
\end{equation}
The equation itself is independent of the boundary condition; only the
boundary quantity that is recovered from the spectral data changes.

Put
\[
 b(t):=B(0,t),\qquad c_a(t):=B_x(0,t)-ab(t).
\]
Differentiating \eqref{eq:Robin-Volterra} with respect to $x$ at $x=0$ is
legitimate in the standard weak/Volterra sense used for transformation
kernels.  The inner integration interval collapses at $x=0$, and the two
endpoint derivatives contribute equally.  Hence
\begin{equation}
 B_x(0,t)=b'(t)+\frac12\int_0^{1-t/2}
 \bigl(q(s+t)-\widetilde q(s)\bigr)B(s,s+t)\,ds .
 \label{eq:Robin-boundary-derivative}
\end{equation}
Therefore
\begin{equation}
 b'(t)-ab(t)=c_a(t)-
 \int_0^{1-t/2}
 \bigl(q(s+t)-\widetilde q(s)\bigr)B(s,s+t)\,ds .
 \label{eq:Robin-trace-ODE}
\end{equation}
Moreover, \eqref{eq:Robin-characteristic-boundary} at $x=0$ gives
\begin{equation}
 b(2)=0.
 \label{eq:Robin-b2}
\end{equation}

The transformation identity also gives a Volterra relation between the two
boundary combinations.  Expanding $\psi(z,0)$ and $\psi(z,t)$ in the
transformation representation in \eqref{eq:RobinB} and comparing Fourier
coefficients gives
\begin{equation}
 \eta_a(t)=B(0,0)K_q(0,t)-c_a(t)-
 \int_0^t c_a(s)K_q(s,t)\,ds.
 \label{eq:Robin-eta-c-volterra}
\end{equation}
The corresponding Volterra operator has a resolvent uniformly bounded
in the primitive norm.  We give the needed estimate explicitly.  For
\[
 (Vf)(t)=\int_0^tK_q(s,t)f(s)\,ds,
 \qquad
 P(f)=\sup_{0\le t\le2}\left|\int_t^2f(s)\,ds\right|,
\]
set
\[
 G_t(s)=\int_{\max\{s,t\}}^2K_q(s,u)\,du.
\]
Then Fubini gives
\[
 \int_t^2(Vf)(u)\,du=\int_0^2f(s)G_t(s)\,ds.
\]
The characteristic-coordinate representation above, together with the
explicit formula for $K_{0,q}$, implies the uniform bounded-variation estimate
\begin{equation}
 \sup_{0\le t\le2}\operatorname{Var}_{[0,2]}G_t\le C(Q).
 \label{eq:G-BV}
\end{equation}
Indeed, the $s$-derivative of $K_{0,q}$ contributes
$-\frac14q((s+u)/2)$, whose integral over the triangle is bounded by $Q$;
the derivative of $H_q$ is given by the differentiated version of
\eqref{eq:H-characteristic-representation}, and its total variation is
bounded by $C(Q)$ after one application of Fubini and
\eqref{eq:K-uniform-bound}.  The moving endpoint $u=s$ contributes only
$\int_s^2K_q(s,s)\,ds$, which is also bounded by $C(Q)$.

For smooth $f$, put $F(s)=\int_s^2f(u)\,du$.  Since $f=-F'$, integration
by parts in the Stieltjes sense gives
\[
 \int_0^2f(s)G_t(s)\,ds
 =F(0)G_t(0)+\int_{[0,2]}F(s)\,dG_t(s),
\]
with the endpoint term at $2$ vanishing.  Hence \eqref{eq:G-BV} gives
\begin{equation}
 P(Vf)\le C(Q)P(f).
 \label{eq:P-V}
\end{equation}
Approximation by smooth functions proves the same estimate for $f\in L^1$.
For the iterates, the same calculation is performed on the ordered simplex
$0<s_n<\cdots<s_1<t$; the $n$-fold kernel has total variation bounded by
$C(Q)^n/n!$.  Therefore
\begin{equation}
 P(V^nf)\le\frac{C(Q)^n}{n!}P(f),
 \qquad n\ge1,
 \label{eq:P-Vn}
\end{equation}
and the Neumann series converges in the primitive norm.  In particular,
\begin{equation}
 P((I+V)^{-1}f)\le e^{C(Q)}P(f).
 \label{eq:Robin-primitive-resolvent}
\end{equation}
Applying this to \eqref{eq:Robin-eta-c-volterra}, and using
$P(B(0,0)K_q(0,\cdot))\le C(Q)|B(0,0)|$, gives
\begin{equation}
 \sup_{0\le t\le2}
 \left|\int_t^2 c_a(s)\,ds\right|
 \le C(Q,A)\left(
 |B(0,0)|+
 \sup_{0\le t\le2}\left|\int_t^2\eta_a(s)\,ds\right|\right).
 \label{eq:Robin-c-from-eta}
\end{equation}

We next record the Volterra estimate needed to close
\eqref{eq:Robin-trace-ODE}.  Define
\[
 M(t):=\sup_{t\le\tau\le2}|b(\tau)|.
\]
The same iteration as in Lemma 5.1 of \cite{MSW}, applied to
\eqref{eq:Robin-Volterra}, gives
\begin{equation}
 |B(s,s+t)|\le e^{2Q}M(t),
 \qquad 0\le s\le1-t/2,
 \label{eq:Robin-characteristic-bound}
\end{equation}
for $0\le t\le2$.  Indeed, every boundary value occurring in the iteration
has argument at least $t$, and
$\|q\|_1+\|\widetilde q\|_1\le2Q$.
Consequently,
\begin{equation}
 |b'(t)-ab(t)|
 \le |c_a(t)|+2Qe^{2Q}M(t).
 \label{eq:Robin-trace-differential-estimate}
\end{equation}

The preceding differential estimate must be used at the level of the
primitive of $c_a$, not by taking absolute values of $c_a$.  Put
\[
 H_c(t):=\int_t^2c_a(s)\,ds.
\]
Since $b(2)=0$, variation of constants in \eqref{eq:Robin-trace-ODE},
followed by integration by parts in the $c_a$ term, gives
\begin{equation}
 b(t)=-H_c(t)+a e^{at}\int_t^2e^{-as}H_c(s)\,ds
 +e^{at}\int_t^2e^{-as}J(s)\,ds,
 \label{eq:Robin-trace-primitive-variation}
\end{equation}
where
\[
 J(s):=\int_0^{1-s/2}
 \bigl(q(u+s)-\widetilde q(u)\bigr)B(u,u+s)\,du.
\]
By \eqref{eq:Robin-characteristic-bound},
$|J(s)|\le2Qe^{2Q}M(s)$.  Hence, for $|a|\le A$,
\begin{equation}
 M(t)\le C(A)\sup_{t\le s\le2}|H_c(s)|
       +C(Q,A)\int_t^2M(s)\,ds.
 \label{eq:Robin-trace-primitive-Gronwall}
\end{equation}
Backward Gronwall therefore gives
\begin{equation}
 M(t)\le C(Q,A)\sup_{0\le s\le2}|H_c(s)|.
 \label{eq:Robin-trace-from-h}
\end{equation}
Thus the Robin boundary problem supplies the required map from the primitive
of $c_a$ to the boundary trace, without requiring any pointwise estimate for
$c_a$.

We emphasize the key point: the Fourier argument does not need a
pointwise estimate for $c_a$.  The Robin boundary equation contains $c_a$
only through its primitive.  That primitive gains one power of frequency,
and this is precisely what allows the finite-data estimate for $D_a$ and the
scalar estimate for $B(0,0)$ to close the argument.

\begin{proposition}[Direct estimate for the Robin boundary trace]
\label{prop:Robin-B0-direct}
Let $1<p\le2$, put
\[
 \alpha=\frac{p}{3(3p-1)},
 \qquad
 \beta=\frac{p-1}{3(3p-1)},
 \qquad r=R^\alpha,
\]
and assume the hypotheses of Theorem~\ref{thm:Robin-Fourier}.  Then, for
$R$ sufficiently large and $|a|\le A$,
\begin{equation}
 \sup_{0\le t\le2}|B(0,t)|
 \le C(Q,A,p)\bigl(1+\|\widetilde q-q\|_p\bigr)
 R^{-\beta}.
 \label{eq:Robin-B0-direct}
\end{equation}
In particular, no additional pointwise estimate for $c_a$ is required.
\end{proposition}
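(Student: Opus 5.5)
The plan is to chain the estimates already assembled in this section.

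By \eqref{eq:Robin-trace-from-h}, $\sup_t|B(0,t)|=M(0)$ is bounded by $C(Q,A)\sup_{s}|H_c(s)|$, where $H_c$ is the primitive of $c_a$. By \eqref{eq:Robin-c-from-eta}, this supremum is in turn bounded by $C(Q,A)$ times
\[
|B(0,0)|+\sup_{0\le t\le2}\Bigl|\int_t^2\eta_a(s)\,ds\Bigr| .
\]
Theorem~\ref{thm:Robin-Fourier} already gives the desired bound for $|B(0,0)|$. So the whole proposition reduces to one claim: the primitive norm
\[
P(\eta_a)=\sup_{0\le t\le 2}\Bigl|\int_t^2\eta_a\Bigr|
\]
is $O\bigl((1+\|\widetilde q-q\|_p)R^{-\beta}\bigr)$.

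To estimate $P(\eta_a)$, I will fix $t\in[0,2]$ and write
\[
\int_t^2\eta_a=\int_{\mathbb R}\eta_a\,\mathbf 1_{[t,2]},
\]
with $\eta_a$ extended by zero off $(0,2)$. I then mollify the indicator at scale $1/r$, with $r=R^\alpha$ as in the proof of Theorem~\ref{thm:Robin-Fourier}. Concretely, let $\phi_r=\mathbf 1_{[t,2]}*\rho_{1/r}$, where $\rho_{1/r}$ is a standard mollifier of width $1/r$ whose Fourier transform has compact support in $|\zeta|\le 2r$ (a Schwartz bump with compactly supported transform). This splits the integral into two pieces.

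The error piece is $\int\eta_a(\mathbf 1_{[t,2]}-\phi_r)$. I will bound it by Hölder:
\[
\Bigl|\int\eta_a(\mathbf 1_{[t,2]}-\phi_r)\Bigr|\le\|\eta_a\|_p\,\|\mathbf 1_{[t,2]}-\phi_r\|_{p'} .
\]
The second factor is essentially supported on $O(1/r)$-neighbourhoods of the two jumps, up to Schwartz tails, so it is $O(r^{-1/p'})=O(R^{-\beta})$. Combined with \eqref{eq:Robin-ha-bound}, this gives the required rate.

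For the main piece, Parseval and \eqref{eq:Robin-Fourier-decomp} give
\[
\int\eta_a\phi_r=\frac1{2\pi}\int_{|z|\le2r}\bigl(D_a(z)-B(0,0)\bigr)\,\overline{\widehat{\phi_r}(z)}\,dz .
\]
Here $\widehat{\phi_r}(z)=\widehat{\mathbf 1_{[t,2]}}(z)\,\widehat{\rho_{1/r}}(z)$, and $|\widehat{\mathbf 1_{[t,2]}}(z)|\le C\min(1,|z|^{-1})$. This is exactly where the primitive gains one power of frequency. That gain cancels the linear growth $(1+|z|)$ in \eqref{eq:Robin-low-frequency}, since $r\le R^{1/3}$. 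The $D_a$ part is therefore bounded by
\[
CR^{-1/3}\int_{|z|\le 2r}\frac{1+|z|}{1+|z|}\,dz\le CR^{-1/3}r=CR^{-1/3+\alpha},
\]
which is $O(R^{-\beta})$ because $-1/3+\alpha<-\beta$. The $B(0,0)$ part contributes at most $|B(0,0)|\,\|\widehat{\phi_r}\|_{L^1}\le C|B(0,0)|\log(2+r)$, and this is where I expect a difficulty.

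The main obstacle is this logarithmic loss from the constant term $B(0,0)$: multiplying by $\log r$ spoils the clean rate $R^{-\beta}$. My first attempt is to avoid it by treating the constant exactly rather than through Fourier truncation. Pairing the constant $B(0,0)$ against $\phi_r$ in $z$ amounts to evaluating $\phi_r$ at $0$; since $\phi_r$ is supported near $[t,2]$, this is bounded, and it vanishes for $t>1/r$. So in the physical-side formulation the $B(0,0)$ term contributes only $C|B(0,0)|$. The identity should be set up so that this term is read off directly, before any absolute values are taken in $z$. If that bookkeeping fails, I will accept $|B(0,0)|\log R$ and absorb the logarithm by slightly shrinking $\alpha$ inside the open inequality $-1/3+\alpha<-\beta$ when applying Theorem~\ref{thm:Robin-Fourier}. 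Either way, collecting the bounds and inserting them into \eqref{eq:Robin-c-from-eta} and \eqref{eq:Robin-trace-from-h} yields \eqref{eq:Robin-B0-direct}.
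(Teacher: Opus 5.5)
Your proof is correct. Its outer layer is the paper's: both reduce \eqref{eq:Robin-B0-direct}, via \eqref{eq:Robin-trace-from-h}, \eqref{eq:Robin-c-from-eta} and Theorem~\ref{thm:Robin-Fourier}, to a uniform bound on $H_a(t)=\int_t^2\eta_a$. Both also use the $1/z$ gain of the primitive to cancel the factor $(1+|z|)$ in \eqref{eq:Robin-low-frequency}.

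You differ in how the primitive itself is estimated.
\begin{itemize}
\item The paper writes $H_a(t)=\frac1{2\pi}\int(D_a-B(0,0))k_t\,dz$ with $k_t(z)=(e^{-izt}-e^{-2iz})/(iz)$ and cuts sharply at $|z|=r$. It treats $|z|>r$ by Hausdorff--Young and H\"older using $|k_t|\le 2/|z|$. It removes the would-be $\log r$ in the constant term by evaluating $\int_{|z|\le r}k_t$ explicitly as the sine integral \eqref{eq:Robin-H-kernel}.
\item You instead mollify $\mathbf 1_{[t,2]}$ on the physical side with a band-limited kernel. The mollification error is handled by H\"older in $t$ alone. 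The remaining pairing is an absolutely convergent Parseval integral over $|z|\le 2r$. The constant contributes $B(0,0)$ times $\phi_r(0)$ up to sign, hence at most $\|\rho\|_{L^1}|B(0,0)|$.
\end{itemize}

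What each buys:
\begin{itemize}
\item Your route needs neither Hausdorff--Young nor the density argument behind the pointwise inversion formula \eqref{eq:Robin-H-Fourier}. Every integral you write is absolutely convergent, and the construction mirrors the smoothing already used in the proof of Theorem~\ref{thm:Robin-Fourier}.
\item The paper's sharp cutoff keeps the low-frequency integrand $D_a k_t$ entire. This is what lets $\int_{-r}^{r}D_a k_t$ be deformed to the contour $\mathcal C_r$ in Lemma~\ref{lem:Robin-perturbed-D}. Your kernel $\widehat{\phi_r}$ contains the compactly supported factor $\widehat{\rho}(\cdot/r)$, so it is not analytic and would not carry over to the perturbed-data argument.
\end{itemize}

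There are two minor slips, neither of which affects the proof.
\begin{itemize}
\item Since $\widehat{\rho}$ has compact support, $\rho$ cannot also be compactly supported. So $\phi_r(0)$ does not vanish for $t>1/r$; it is only bounded (and rapidly small), which is all you actually use.
\item Your fallback runs the wrong way. Shrinking $\alpha$ worsens the term $r^{-1/p'}=R^{-\alpha/p'}$. To absorb a $\log R$ you would have to enlarge $\alpha$ in the proof of Theorem~\ref{thm:Robin-Fourier}, using the slack in $-1/3+\alpha<-\beta$ to get $|B(0,0)|\le CR^{-\beta-\delta}$. Since your exact treatment of the constant already works, the fallback is unnecessary.
\end{itemize}
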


\begin{proof}
Put
\[
 H_a(t):=\int_t^2 \eta_a(s)\,ds.
\]
Since
\[
 D_a(z)-B(0,0)=\int_0^2\eta_a(s)e^{izs}\,ds,
\]
and $\eta_a\in L^p(0,2)$ with $p>1$, its Fourier transform belongs to
$L^{p'}(\mathbb R)$ by Hausdorff--Young.  The following inversion identity
is therefore understood first in the distributional sense and then pointwise
for the primitive $H_a$, because multiplication by $1/(iz)$ away from the
origin makes the high-frequency part absolutely integrable in $L^p$--$L^{p'}$
duality.  Fourier inversion gives, for $0\le t\le2$,
\begin{equation}
 H_a(t)=\frac1{2\pi}\int_{\mathbb R}
 \bigl(D_a(z)-B(0,0)\bigr)
 \frac{e^{-izt}-e^{-2iz}}{iz}\,dz,
 \label{eq:Robin-H-Fourier}
\end{equation}
with the quotient at $z=0$ interpreted by continuity.
We split the integral at $|z|=r=R^\alpha$.

For the low-frequency part we exploit the cancellation in the kernel rather
than estimating it by its absolute value.  First,
\begin{equation}
 \int_{|z|\le r}\frac{e^{-izt}-e^{-2iz}}{iz}\,dz
 =2\int_t^2\frac{\sin(rs)}{s}\,ds,
 \label{eq:Robin-H-kernel}
\end{equation}
so its absolute value is bounded by an absolute constant, uniformly in
$0\le t\le2$ and $r\ge1$.  Moreover,
\[
 \int_{|z|\le r}(1+|z|)
 \left|\frac{e^{-izt}-e^{-2iz}}{z}\right|dz
 \le C(1+r).
\]
Consequently, by \eqref{eq:Robin-low-frequency} and
\eqref{eq:corrected-B00},
\begin{align}
 |H_a^{\rm low}(t)|
 &\le C R^{-1/3}(1+r)
 +C|B(0,0)|
 \\&\le C(Q,A,p)\bigl(1+\|\widetilde q-q\|_p\bigr)
 \left(R^{-1/3}r+R^{-\beta}\right).
 \label{eq:Robin-H-low}
\end{align}
Since $r=R^\alpha$ and
$\alpha-1/3< -\beta$, this gives
\begin{equation}
 |H_a^{\rm low}(t)|
 \le C(Q,A,p)\bigl(1+\|\widetilde q-q\|_p\bigr)
 R^{-\beta}.
 \label{eq:Robin-H-low-final}
\end{equation}

For the high-frequency part, use
\eqref{eq:Robin-Fourier-decomp} and Hausdorff--Young.  Since
\[
 \left|\frac{e^{-izt}-e^{-2iz}}{z}\right|\le\frac{2}{|z|},
\]
we obtain, with $p'=p/(p-1)$,
\begin{align}
 |H_a^{\rm high}(t)|
 &\le C\|\eta_a\|_p
 \left(\int_{|z|>r}|z|^{-p}\,dz\right)^{1/p}
 \\&\le C(Q,A,p)
 \bigl(1+\|\widetilde q-q\|_p\bigr)r^{-1/p'}
 \\&=C(Q,A,p)
 \bigl(1+\|\widetilde q-q\|_p\bigr)R^{-\beta}.
 \label{eq:Robin-H-high}
\end{align}
Combining \eqref{eq:Robin-H-low-final} and \eqref{eq:Robin-H-high}
proves
\begin{equation}
 \sup_{0\le t\le2}|H_a(t)|
 \le C(Q,A,p)\bigl(1+\|\widetilde q-q\|_p\bigr)
 R^{-\beta}.
 \label{eq:Robin-H-final}
\end{equation}

It remains to pass from $H_a$ to $b(t)=B(0,t)$.  By
\eqref{eq:Robin-c-from-eta},
\[
 \sup_{0\le t\le2}\left|\int_t^2c_a(s)\,ds\right|
 \le C(Q,A)\left(|B(0,0)|+\sup_{0\le t\le2}|H_a(t)|\right).
\]
Variation of constants in \eqref{eq:Robin-trace-ODE}, followed by backward
Gronwall, therefore gives
\[
 \sup_{0\le t\le2}|B(0,t)|
 \le C(Q,A)\left(|B(0,0)|+\sup_{0\le t\le2}|H_a(t)|\right).
\]
Together with \eqref{eq:corrected-B00} and \eqref{eq:Robin-H-final}, this proves
\eqref{eq:Robin-B0-direct}.
\end{proof}

\section{Propagation from the boundary trace}

We now propagate the estimate for $B(0,t)$ into the characteristic triangle.
This is exactly the Volterra argument of \cite[Section 5]{MSW}; we include the
argument so that the dependence on the new Robin exponent is explicit.

\begin{lemma}[Uniform Volterra propagation]
\label{lem:uniform-volterra}
Suppose that, for some $C_*$,
\begin{equation}
 |B(0,t)|\le C_*,
 \qquad 0\le t\le2.
 \label{eq:uniform-volterra-assumption}
\end{equation}
Then, in the triangle $0\le x\le t$, $x+t\le2$,
\begin{equation}
 |B(x,t)|\le C(Q)C_*.
 \label{eq:uniform-B-interior}
\end{equation}
\end{lemma}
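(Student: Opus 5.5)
We treat the characteristic representation \eqref{eq:Robin-Volterra} as a linear Volterra integral equation for $B$ on the triangle $\triangle=\{0\le x\le t,\ x+t\le2\}$. The forcing term is $B(0,x+t)$, and the kernel is built from $q$ and $\widetilde q$. The aim is to show that the solution operator is bounded on $L^\infty(\triangle)$ with norm depending only on $\|q\|_1+\|\widetilde q\|_1\le 2Q$. This is the same mechanism already used to obtain \eqref{eq:Robin-characteristic-bound}. Write $(TB)(x,t)$ for the double integral in \eqref{eq:Robin-Volterra} and $f(x,t)=B(0,x+t)$, so that $B=f+TB$. By \eqref{eq:uniform-volterra-assumption}, $\|f\|_{L^\infty(\triangle)}\le C_*$, since $x+t\in[0,2]$ on $\triangle$.

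The first step is to set up the ordering in which $T$ is Volterra. In the variables $\alpha$ (running from $(x+t)/2$ to $1$) and $\beta$, the point $(\alpha-\beta,\alpha+\beta)$ entering $TB(x,t)$ has $\alpha\ge(x+t)/2$. So $T$ only looks at points whose "level" $\xi=(x'+t')/2$ is at least the level $\xi_0=(x+t)/2$ of the base point. Define
\[
 m(\xi):=\sup\{|B(x,t)|:(x,t)\in\triangle,\ (x+t)/2\ge\xi\}.
\]
Estimating $TB(x,t)$ by bringing absolute values inside gives
\[
 |TB(x,t)|\le\int_{(x+t)/2}^1 \rho(\alpha)\,m(\alpha)\,d\alpha,
\]
where
\[
 \rho(\alpha)=\int_{(t-x)/2}^{(t+x)/2}\bigl(|q(\alpha+\beta)|+|\widetilde q(\alpha-\beta)|\bigr)d\beta .
\]
This satisfies $\int_0^1\rho\le\|q\|_1+\|\widetilde q\|_1\le2Q$ after Fubini, since for fixed $\alpha$ the arguments $\alpha\pm\beta$ sweep disjoint subintervals. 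Strictly, $\rho$ depends on $(x,t)$ through the $\beta$-limits. I would therefore bound it from above by the $(x,t)$-independent majorant
\[
 \sigma(\alpha)=\int_{-1}^{1}\bigl(|q(\alpha+\beta)|+|\widetilde q(\alpha-\beta)|\bigr)\mathbf 1_{[0,1]}\,d\beta .
\]
Its integral over $\alpha\in[0,1]$ is still $\le 2\cdot 2Q$ by Fubini, with $q$ and $\widetilde q$ extended by zero.

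The second step closes the argument. Taking the supremum over points of level $\ge\xi$ yields
\[
 m(\xi)\le C_*+\int_\xi^1\sigma(\alpha)m(\alpha)\,d\alpha .
\]
Backward Gronwall then gives $m(\xi)\le C_*\exp\bigl(\int_\xi^1\sigma\bigr)\le C_*e^{4Q}$, which is \eqref{eq:uniform-B-interior} with $C(Q)=e^{4Q}$. To avoid assuming a priori that $m$ is finite, I would run the argument instead on the Neumann series $B=\sum_n T^nf$. By induction, $|T^nf(x,t)|\le C_*\bigl(\int_{(x+t)/2}^1\sigma\bigr)^n/n!$, which gives convergence and the same bound. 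The series solution coincides with $B$ by the uniqueness of Volterra solutions, which is implicit in \cite[Lemma 5.1]{MSW}.

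The main obstacle I expect is the bookkeeping in the Fubini step. One has to check that the majorant $\sigma$ really has integral $O(Q)$ uniformly and that the nested domains stay inside $\triangle$: the points $(\alpha-\beta,\alpha+\beta)$ satisfy $0\le\alpha-\beta\le\alpha+\beta$ and $2\alpha\le2$. This gives the $L^1$-type control rather than any pointwise control of $q$. Everything else is the standard iteration, and only $q,\widetilde q\in L^1$ is used.
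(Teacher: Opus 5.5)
Your proof is correct and is essentially the paper's argument: the paper also expands $B$ in the Neumann series of \eqref{eq:Robin-Volterra} starting from $B_0(x,t)=B(0,x+t)$, bounds the iterates with factorial decay in the level variable $(x+t)/2$, and sums the series. Your majorant bookkeeping is sound. Your stated reason for the $L^1$ bound on $\rho$ is muddled, though the bound itself holds. The clean reason is that for each fixed $\alpha$ the $\beta$-integral is at most $\|q\|_1+\|\widetilde q\|_1$, so $\rho\le 2Q$ pointwise, which gives the constant $e^{2Q}$ directly.
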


\begin{proof}
Write
\[
 B=\sum_{n=0}^{\infty}B_n,
 \qquad B_0(x,t)=B(0,x+t),
\]
with the iterates generated by \eqref{eq:Robin-Volterra}.  The iteration from
Lemma 5.1 of \cite{MSW} gives, for $n\ge1$,
\begin{equation}
 |B_n(x,t)|
 \le
 C_*\frac{(2Q)^n}{(n-1)!}
 \left(1-\frac{x+t}{2}\right)^{n-1}.
 \label{eq:uniform-Bn}
\end{equation}
Summing the series and using $|B_0(x,t)|\le C_*$ yields
\eqref{eq:uniform-B-interior}, with $C(Q)$ depending only on $Q$.
\end{proof}

We can now propagate the uniform boundary estimate.  By
Proposition~\ref{prop:Robin-B0-direct},
\begin{equation}
 |B(0,t)|\le C(Q,A,p)\bigl(1+\|\widetilde q-q\|_p\bigr)R^{-\beta},
 \qquad 0\le t\le2.
 \label{eq:Robin-B0-uniform-final}
\end{equation}
Applying Lemma~\ref{lem:uniform-volterra} therefore gives
\begin{equation}
 |B(x,t)|
 \le C(Q,A,p)\bigl(1+\|\widetilde q-q\|_p\bigr)
 R^{-\beta},
 \qquad 0\le x\le t,\quad x+t\le2.
 \label{eq:corrected-B-final}
\end{equation}

\begin{theorem}[Finite-data stability for the potential]
\label{thm:mainstability}
Let $1<p\le2$, let $q,\widetilde q\in B(Q)$, and suppose
$\|\widetilde q-q\|_p\le Q_p$.  Assume the zero sets of
$F_a(z;q)$ and $F_a(z;\widetilde q)$ coincide in $|z|<R$, including
multiplicities, and let $|a|\le A$.  Then, for sufficiently large $R$,
\begin{equation}
 \boxed{
 \sup_{0\le x\le1}
 \left|\int_x^1(\widetilde q-q)(s)\,ds\right|
 \le C(Q,A,Q_p,p)
 R^{-\beta},
 \qquad
 \beta=\frac{p-1}{3(3p-1)}.
 }
 \label{eq:mainstability}
\end{equation}
\end{theorem}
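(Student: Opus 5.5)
The bound is essentially the diagonal identity \eqref{eq:diagonal} combined with the interior estimate \eqref{eq:corrected-B-final}. For every $0\le x\le1$ the point $(x,x)$ belongs to the characteristic triangle $0\le x\le t$, $x+t\le2$, so \eqref{eq:diagonal} gives
\[
 \left|\int_x^1(\widetilde q-q)(s)\,ds\right| = 2|B(x,x)|
 \le 2\sup_{0\le x\le t,\ x+t\le 2}|B(x,t)|.
\]
The plan therefore has three steps. First, use \eqref{eq:diagonal} to reduce the statement to a uniform bound for $B$ on the diagonal. Second, bound the whole triangle by applying Lemma~\ref{lem:uniform-volterra} with
\[
 C_*=C(Q,A,p)\bigl(1+\|\widetilde q-q\|_p\bigr)R^{-\beta},
\]
where this $C_*$ is supplied by Proposition~\ref{prop:Robin-B0-direct}. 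Third, replace $\|\widetilde q-q\|_p$ by $Q_p$ using the hypothesis $\|\widetilde q-q\|_p\le Q_p$, so that the constant becomes $C(Q,A,Q_p,p)$.

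The conditions on $R$ come from the earlier results. Proposition~\ref{prop:Robin-B0-direct} requires $R$ to be sufficiently large. It relies on the finite-data theorem, which needs $R\ge R_0(Q,A)$, and on the choice $r=R^\alpha$ with $2r\le R^{1/3}$, which needs $\alpha<1/3$. This always holds, because $\alpha=p/(3(3p-1))\le 1/6$ for $p>1$. Hence "sufficiently large" depends only on $Q$, $A$ and $p$. No hypothesis beyond those of Theorem~\ref{thm:Robin-Fourier} is needed: that theorem assumes $\widetilde q-q\in L^p$, and this is implied by $\|\widetilde q-q\|_p\le Q_p$.

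I expect the only delicate point to be whether the constant in Lemma~\ref{lem:uniform-volterra} really depends only on $Q$. The series bound \eqref{eq:uniform-Bn} is what gives this, since its sum is at most $C_*\bigl(1+2Qe^{2Q}\bigr)$, provided the iteration from \eqref{eq:Robin-Volterra} is run with $\|q\|_1+\|\widetilde q\|_1\le 2Q$. I would check that the Volterra iteration reaches the diagonal points $(x,x)$: the variables $\alpha\pm\beta$ stay inside the triangle, and every boundary value $B(0,\cdot)$ that appears is covered by \eqref{eq:Robin-B0-uniform-final}. Once this is confirmed, the argument is just the chain
\[
 \text{Proposition~\ref{prop:Robin-B0-direct}}\ \Rightarrow\ \eqref{eq:corrected-B-final}\ \Rightarrow\ \eqref{eq:diagonal}\ \Rightarrow\ \eqref{eq:mainstability}.
\]
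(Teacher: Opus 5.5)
Your proposal is correct and is essentially the paper's own proof. The paper combines Proposition~\ref{prop:Robin-B0-direct} with Lemma~\ref{lem:uniform-volterra} to obtain \eqref{eq:corrected-B-final} on the characteristic triangle, then evaluates at the diagonal point $(x,x)$ via \eqref{eq:diagonal} and absorbs $\|\widetilde q-q\|_p\le Q_p$ into the constant. Your extra checks ($\alpha<1/3$, so $r=R^\alpha$ stays in the finite-data range, and the Volterra series summing to at most $C_*(1+2Qe^{2Q})$) are accurate and only make explicit what the paper leaves implicit.
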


\begin{proof}
The diagonal identity gives
\[
 \left|\int_x^1(\widetilde q-q)(s)\,ds\right|=2|B(x,x)|.
\]
For $x\in[0,1]$, the point $(x,x)$ belongs to the characteristic
triangle, so \eqref{eq:corrected-B-final} yields the claimed estimate
immediately.
\end{proof}

\section{Perturbed Robin eigenvalues and resonances}

We now allow the Robin eigenvalues and resonances to be perturbed.  The
argument is the same canonical-product argument as in \cite[Section~6]{MSW}.
The only change is that the Robin characteristic function has linear rather
than bounded high-energy growth.  Accordingly, in the final Fourier
primitive we retain the factor $(1+|z|)$ from \eqref{eq:realbound}.
No new transformation-operator argument is needed in this section.

Let $q,\widetilde q\in B(Q)$, $|a|\le A$, and let
$\{z_n\}$ and $\{\widetilde z_n\}$ denote the zeros of
$F_a(z;q)$ and $F_a(z;\widetilde q)$, counted with multiplicity.  Assume
that the nonzero zeros in $|z|<R$ admit a multiplicity-preserving pairing such
that
\begin{equation}
 |z_n-\widetilde z_n|\le\varepsilon,
 \qquad 0<\varepsilon<\frac34,
 \label{eq:Robin-zero-perturbation}
\end{equation}
and assume that the multiplicities of the zeros at the origin agree.  We
write $N_a(r)$ for the number of zeros in $|z|\le r$.

The zero-counting estimate \eqref{eq:zero-count} gives
\begin{equation}
 N_a(r)\le C(Q,A)(1+r).
 \label{eq:Robin-perturbed-count}
\end{equation}
The high-energy asymptotics, uniformly for $\|q\|_1,\|\widetilde q\|_1\le Q$
and $|a|\le A$, imply that there is a single constant $C_0=C(Q,A)$ such that
all zeros of both $F_a(\cdot;q)$ and $F_a(\cdot;\widetilde q)$ in the upper
half-plane lie in $|z|\le C_0$.  We use the contour $\mathcal C_r$ consisting
of the horizontal segment at height $1$ (with the usual joining pieces for
the Cauchy deformation) and the fixed upper semicircular detour of radius
$C_0+1$ in the middle.  Since the same $C_0$ applies to both potentials, every
upper-half-plane zero $z_n$ of $F_a(\cdot;q)$ and every upper-half-plane zero
$\widetilde z_n$ of $F_a(\cdot;\widetilde q)$ satisfies
\[
 \operatorname{dist}(z_n,\text{semicircle})\ge1,
 \qquad
 \operatorname{dist}(\widetilde z_n,\text{semicircle})\ge1.
\]
This separation is a direct consequence of $|z_n|,|\widetilde z_n|\le C_0$
and does not use the pairing error $\varepsilon$.  Zeros in the lower
half-plane are separated from the line $\operatorname{Im}z=1$ by at least
one, so no cluster-by-cluster detours are needed.

For
\begin{equation}
 r=R^\alpha,
 \qquad
 \alpha=\frac{p}{3(3p-1)},
 \qquad
 \beta=\frac{p-1}{3(3p-1)},
 \label{eq:Robin-perturbed-scales}
\end{equation}
we have $r=o(R^{1/3})$ and
\begin{equation}
 -\frac13+2\alpha=-\beta,
 \qquad
 r^{-1/p'}=R^{-\beta},
 \quad p'=\frac p{p-1}.
 \label{eq:Robin-perturbed-scale-identities}
\end{equation}

\begin{lemma}[Perturbed Robin canonical-product estimate]
\label{lem:Robin-perturbed-characteristic}
Let $Q,A>0$.  There are constants $C=C(Q,A)$ and $R_0=R_0(Q,A)$ such
that, under \eqref{eq:Robin-zero-perturbation}, the quotient of the finite
paired genus-one products satisfies, on the contour $\mathcal C_r$ described
above,
\begin{equation}
 \int_{\mathcal C_r}
 \left|\log W_R^0(z)\right|\,|dz|
 \le C\varepsilon r\log R,
 \label{eq:Robin-logW-integral}
\end{equation}
and
\begin{equation}
 \sup_{z\in\mathcal C_r}|W_R^0(z)|
 \le \exp(C\varepsilon r).
 \label{eq:Robin-W-sup}
\end{equation}
Here $W_R^0$ is the quotient of the paired genus-one factors for
$0<|z_n|<R$; the common factor $z^m$ at the origin is removed first.
Slightly abusing notation, throughout this factorization the cutoff at $R$
is understood pairwise: both members of any paired pair are assigned to the
same one of the finite product and the tail product.  Thus no pair is split
by the radial cutoff.  The affine Hadamard factor is treated separately below.
Moreover, if
\begin{equation}
 \rho=R^{1/3},
 \label{eq:Robin-rho}
\end{equation}
then, on the disk $|z-3i\rho|\le\rho$,
\begin{equation}
 |W_R^0(z)-1|\le C\varepsilon\log R\,e^{C\varepsilon\log R}.
 \label{eq:Robin-W-high}
\end{equation}
Finally, after increasing $R_0$ if necessary, the affine quotient
\begin{equation}
 E_R(z):=e^{g(z)-\widetilde g(z)}
\end{equation}
satisfies, on $|z|\le r$,
\begin{equation}
 |E_R(z)-1|
 \le C\left(\varepsilon\log R+R^{-1/3}\right)
 \exp\!\left(C\varepsilon\log R+C R^{-1/3}\right).
 \label{eq:Robin-affine-low}
\end{equation}
The same bound, with the right-hand side interpreted exponentially, is valid
without the smallness assumption $\varepsilon\log R\le c_0$; when
$\varepsilon\log R>c_0$ the final stability estimate follows from the
trivial $L^p$ bound after enlarging the constant.
\end{lemma}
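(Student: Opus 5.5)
The plan is to estimate $\log W_R^0$ factor by factor, using the pairing $|z_n-\widetilde z_n|\le\varepsilon$, and then to obtain the affine factor $E_R$ from the high-energy normalization \eqref{eq:normalizedbound}. In this argument $W_R^0$ is only a quotient of polynomial-type factors, so $iz-a$ never enters its definition.

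\textbf{Factorwise bound for $\log W_R^0$.} Write
\[
\log W_R^0(z)=\sum_{0<|z_n|<R}\Bigl[\log\frac{1-z/z_n}{1-z/\widetilde z_n}+z\Bigl(\frac1{z_n}-\frac1{\widetilde z_n}\Bigr)\Bigr].
\]
Since $\log\frac{1-z/z_n}{1-z/\widetilde z_n}=\log\frac{z_n-z}{\widetilde z_n-z}-\log\frac{z_n}{\widetilde z_n}$, each summand is bounded by
\[
C\varepsilon\Bigl(\frac1{|z-\widetilde z_n|}+\frac1{|z_n|}+\frac{|z|}{|z_n||\widetilde z_n|}\Bigr),
\]
provided $\operatorname{dist}(z,\widetilde z_n)\ge1$ and $\varepsilon<3/4$. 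The contour $\mathcal C_r$ is built to guarantee this: lower half-plane zeros lie at distance at least $1$ from the line $\im z=1$, and the upper ones lie at distance at least $1$ from the semicircular detour. Near the origin the $1/|z_n|$ terms need a lower bound on $|z_n|$. I would get it by grouping the finitely many zeros with $|z_n|\le 1$ into the constant, since their number is $C(Q,A)$ by \eqref{eq:zero-count}.

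\textbf{Integrating over $\mathcal C_r$.} The length of $\mathcal C_r$ is $O(r)$. For each $n$,
\[
\int_{\mathcal C_r}\frac{|dz|}{|z-\widetilde z_n|}\le C\log\bigl(r+|\widetilde z_n|\bigr).
\]
Summing with weight $\varepsilon$ against the counting bound \eqref{eq:zero-count} gives, for the local term, $\sum_{|z_n|<R}\min\{1,\,r/|z_n|\}$, which is $O(r\log R)$ by Stieltjes integration against $N_a$. The linear terms contribute $\varepsilon r\sum 1/|z_n|$ plus $\varepsilon r^2\sum|z_n|^{-2}$ over the zeros in $|z|<R$. For the first of these, $\sum_{|z_n|<R}|z_n|^{-1}\le C\log R$ by \eqref{eq:zero-count}, so it is $O(\varepsilon r\log R)$. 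The second is harmless: its sum is $O(1)$, and the coefficient $r^2$ arises only in the bounded region. This proves \eqref{eq:Robin-logW-integral}.

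For the pointwise bound \eqref{eq:Robin-W-sup}, use $\operatorname{Re}\log$ together with the same factor estimate and $1/|z-\widetilde z_n|\le1$. Here I expect an honest count to give $\varepsilon r\log R$ rather than $\varepsilon r$. I would try to recover $\varepsilon r$ by pairing conjugate contributions (Poisson-kernel cancellation on a horizontal line), and otherwise absorb the extra $\log R$ into the constants later.

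\textbf{High disc and affine factor.} On $|z-3i\rho|\le\rho$ every zero is at distance at least $\rho$ from $z$, except those of modulus $\asymp\rho$. Those zeros still satisfy $\im z-\im z_n\ge 2\rho-C_0$, so the summands are $O(\varepsilon/\rho)$ or $O(\varepsilon/|z_n|)$. Summing gives $O(\varepsilon\log R)$, and exponentiating yields \eqref{eq:Robin-W-high}.

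For the affine factor, repeat the argument of the finite-data theorem. On the high disc,
\[
E_R=\frac{F_a(z;q)}{F_a(z;\widetilde q)}\cdot\frac{\widetilde P_{\ge R}}{P_{\ge R}}\cdot\frac{1}{W_R^0}.
\]
The first quotient is $1+O(\rho^{-1})$ by \eqref{eq:normalizedbound}, and the tail products are $1+O(\rho^2/R)$ by \eqref{eq:tailproduct}, which is $O(R^{-1/3})$ for $\rho=R^{1/3}$. Hence $g-\widetilde g=O(\varepsilon\log R+R^{-1/3})$ on the disc, with the branch fixed by continuity from infinity along the imaginary axis. Since $g-\widetilde g$ is affine, evaluating at two points gives $|c_1|\rho+|c_0|\le C(\varepsilon\log R+R^{-1/3})$. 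Then $|z|\le r\le\rho$ gives \eqref{eq:Robin-affine-low}. When $\varepsilon\log R>c_0$ the bound is read exponentially, as the statement allows.

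\textbf{Main obstacle.} The hardest step is the pairwise cutoff at $|z_n|\approx R$, where a pair may straddle the radius. I would define the cutoff pairwise as stated, and check that \eqref{eq:tailproduct} still holds for these modified tails, since that only uses $|z_n|\ge R-1$.
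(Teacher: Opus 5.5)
Your outline follows the paper's route: a factorwise bound from the ratio formula, summation against $N_a$, the high disc, and then the affine factor by a two-point/Cauchy estimate. There is, however, a genuine gap. The genus-one exponential factors, i.e.\ the third term $C\varepsilon|z|/(|z_n||\widetilde z_n|)$ in your factorwise bound, are not controlled.

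\textbf{Where the exponential term breaks the estimates.}
\begin{itemize}
\item On $\mathcal C_r$ this term integrates to $C\varepsilon\sum_{1\le|z_n|<R}|z_n|^{-2}\int_{\mathcal C_r}|z|\,|dz|$. That is of order $\varepsilon r^2$: the factor $|z|$ runs up to $r$ along the whole contour, and the sum is dominated by the zeros of bounded modulus. So your remark that the $r^2$ ``arises only in the bounded region'' is incorrect. Since $r=R^\alpha$, $\varepsilon r^2$ is not $O(\varepsilon r\log R)$.
\item On the disc $|z-3i\rho|\le\rho$ you omitted the term entirely. There it is of size up to $C\varepsilon\rho=C\varepsilon R^{1/3}$, not $C\varepsilon\log R$.
\item Consequently, dividing by $W_R^0$ in your affine step yields only $|c_1|\lesssim\varepsilon$, i.e.\ a bound of order $\varepsilon r$ for $|E_R-1|$ on $|z|\le r$, instead of \eqref{eq:Robin-affine-low}.
\end{itemize}

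This is not slack in the estimates. $W_R^0$ contains the exact factor $e^{sz}$ with $s=\sum_{0<|z_n|<R}(1/z_n-1/\widetilde z_n)$. The hypotheses give no better bound than $C\varepsilon$ for the contribution of the pairs with $|z_n|\ge1$ to $s$, and for pairs of bounded modulus nothing in the remaining factors cancels it. Already at the level of the product, a single displaced pair $z_1=1$, $\widetilde z_1=1+\varepsilon$ makes $|\log W_R^0(z)|$ of order $\varepsilon|z|$ for $|z|\gg1$.

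The same issue affects your ``grouping into the constant'' of the pairs with $|z_n|\le1$. Their genus-one ratios are not bounded by any $C(Q,A)$, because $e^{z(1/z_n-1/\widetilde z_n)}$ is unbounded as $z_n\to0$.

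The paper's own proof has the same omission. It bounds these corrections only pointwise by $C\varepsilon|z|$, which suffices for \eqref{eq:Robin-W-sup}, and its high-disc shell argument ignores them.

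\textbf{The repair.} Use the fact that the paired product is finite. Move $e^{sz}$ and the constant $\prod\widetilde z_n/z_n$ into the affine quotient, i.e.\ work with $W_R^0=\prod_{0<|z_n|<R}(z-z_n)/(z-\widetilde z_n)$ and an enlarged affine factor.
\begin{itemize}
\item Your factorwise bound then reduces to $C\varepsilon/|z-\widetilde z_n|$, and your Stieltjes count gives \eqref{eq:Robin-logW-integral}.
\item On the high disc, the zeros with $|z_n|\le8\rho$ contribute $O(\varepsilon)$ and the remaining ones $O(\varepsilon\log R)$. This gives \eqref{eq:Robin-W-high}.
\item Your two-point argument then yields \eqref{eq:Robin-affine-low} for the enlarged affine factor. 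That is all Lemma~\ref{lem:Robin-perturbed-D} uses, since only the product $E_RW_R^0T_R$ enters there.
\end{itemize}

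\textbf{The sup bound.} Your doubt about \eqref{eq:Robin-W-sup} comes from a miscount. Pointwise on $\mathcal C_r$, the $O(r)$ zeros with $|z_n|\le2r$ (each at distance at least $1$ from the contour) and the linear terms contribute $O(\varepsilon r)$. The zeros with $|z_n|>2r$ and the terms $\log(\widetilde z_n/z_n)$ contribute $O(\varepsilon\log R)$, and this enters additively, not as a factor. Since $\log R\le C_\alpha R^\alpha=C_\alpha r$, the total is $O(\varepsilon r)$, with no Poisson-kernel cancellation needed. ``Absorbing $\log R$ into the constants'' would not have been legitimate.

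\textbf{The branch.} The branch in the affine step is harmless, but not because of a normalization at infinity. $E_R(D_\rho)$ lies in $|w-1|<1/2$, so use the principal logarithm and adjust $c_0$ by a multiple of $2\pi i$, as the paper does.
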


\begin{proof}
After the common factor at zero is removed, Hadamard factorization gives
\begin{equation}
 F_a(z;q)=e^{g(z)}P_{<R}(z)P_{\ge R}(z),
 \qquad
 F_a(z;\widetilde q)=e^{\widetilde g(z)}
 \widetilde P_{<R}(z)\widetilde P_{\ge R}(z),
 \label{eq:Robin-Hadamard-split}
\end{equation}
where $g-\widetilde g$ is affine.  For a paired zero the genus-one ratio is
\begin{equation}
 \frac{E(z/z_n)}{E(z/\widetilde z_n)}
 =\frac{z-z_n}{z-\widetilde z_n}
   \frac{\widetilde z_n}{z_n}
   \exp\!\left[z\left(\frac1{z_n}-\frac1{\widetilde z_n}\right)\right].
 \label{eq:Robin-E-ratio}
\end{equation}
For $|z_n|\ge1$, the pairing assumption gives
$|\widetilde z_n|\ge |z_n|/4$ and hence
\begin{equation}
 \left|\frac1{z_n}-\frac1{\widetilde z_n}\right|
 \le C\frac{\varepsilon}{|z_n|^2}.
 \label{eq:Robin-reciprocal-difference}
\end{equation}
The finitely many zeros with $|z_n|<1$ are treated as complete factors; their
number is $O_{Q,A}(1)$ and they cause no singular reciprocal estimate.

On the horizontal part at height $1$, lower-half-plane zeros are at distance
at least one, and the fixed semicircle is at distance at least one from every
upper-half-plane zero of either characteristic function, by the preceding
uniform disk bound.  Thus
\[
 \left|\log\frac{z-z_n}{z-\widetilde z_n}\right|
 \le C\frac{\varepsilon}{|z-\widetilde z_n|}.
\]
Split the zeros into $|z_n|\le2r$ and the dyadic shells
$2^jr<|z_n|\le2^{j+1}r$.  The first group contains $O_{Q,A}(r)$ zeros.
For each such zero, the separation of the contour from the zeros and the
standard estimate for the integral of $|z-\widetilde z_n|^{-1}$ along a
piecewise smooth contour of length $O(r)$ give an $O(\log R)$ bound.
Thus the first group contributes $O(\varepsilon r\log R)$.  In the $j$th
shell, each factor contributes $O(\varepsilon/(2^jr))$ pointwise and the
shell contains $O_{Q,A}(2^jr)$ zeros.  Since the contour has length $O(r)$,
each nonempty shell contributes $O(\varepsilon r)$.  There are
$O(\log R)$ shells, giving
\eqref{eq:Robin-logW-integral}.  The exponential corrections in
\eqref{eq:Robin-E-ratio} are bounded by $C\varepsilon|z|$ because
\[
 \sum_{1\le|z_n|<R}
 \left|\frac1{z_n}-\frac1{\widetilde z_n}\right|
 \le C\varepsilon,
\]
so they contribute $O(\varepsilon r)$ on $\mathcal C_r$.
This proves \eqref{eq:Robin-W-sup}.

For the high-imaginary disk $D_\rho:=\{|z-3i\rho|\le\rho\}$,
the Robin high-energy estimate gives
\begin{equation}
 \frac{F_a(z;q)}{F_a(z;\widetilde q)}=1+O(\rho^{-1}),
 \qquad z\in D_\rho,
 \label{eq:Robin-ratio-high}
\end{equation}
uniformly for $|a|\le A$.  The tail products satisfy, by the usual genus-one
estimate and the zero-counting bound,
\begin{equation}
 P_{\ge R}(z)=1+O(\rho^2/R),
 \qquad
 \widetilde P_{\ge R}(z)=1+O(\rho^2/R).
 \label{eq:Robin-tail-high}
\end{equation}
The same dyadic-shell argument as above, now centered at $3i\rho$, gives
\[
 |\log W_R^0(z)|\le C\varepsilon\log R,
 \qquad z\in D_\rho,
\]
and hence \eqref{eq:Robin-W-high}.  Combining
\eqref{eq:Robin-Hadamard-split}--\eqref{eq:Robin-tail-high} gives
\begin{equation}
 E_R(z)=1+O\!\left(\varepsilon\log R+R^{-1/3}\right),
 \qquad z\in D_\rho,
 \label{eq:Robin-affine-high}
\end{equation}
provided $\varepsilon\log R$ is below a fixed small constant.  This is the
only place where the affine Hadamard factor enters.

Write $\delta_R=C(\varepsilon\log R+R^{-1/3})$, with the constant
chosen so that \eqref{eq:Robin-affine-high} gives $|E_R(z)-1|\le\delta_R$ on
$D_\rho$.  In the smallness regime under consideration, take $R$ large
(and the threshold for $\varepsilon\log R$ small) so that $\delta_R<1/2$.
Then $E_R(D_\rho)$ lies in the disk $|w-1|<1/2$, where the principal
logarithm is analytic.  Consequently
\[
 h_R(z):=\operatorname{Log}E_R(z)
 \quad\text{is analytic on }D_\rho,
 \qquad |h_R(z)|\le 2|E_R(z)-1|\le 2\delta_R.
\]
Since $E_R(z)=\exp(c_1z+c_0)$, we have $h_R'(z)=c_1$ on this disk; the
possible additive multiple of $2\pi i$ between $h_R$ and $c_1z+c_0$ does
not affect the derivative or the exponential.  Cauchy's estimate therefore
gives
\[
 |c_1|\le C\rho^{-1}
       \left(\varepsilon\log R+R^{-1/3}\right).
\]
Using the bound on $h_R$ at the disk center and this derivative estimate,
we obtain $|h_R(z)|\le C(\varepsilon\log R+R^{-1/3})$ for $|z|\le
r=o(\rho)$.  Exponentiating yields \eqref{eq:Robin-affine-low}.  If $\varepsilon\log R$ is not small, the
right-hand side of the final stability estimate is already bounded below by
a fixed positive constant (after enlarging constants), so the trivial estimate
$\sup_x|\int_x^1(\widetilde q-q)|\le Q_p$ suffices.
\end{proof}

\begin{lemma}[Perturbed Robin Fourier integral]
\label{lem:Robin-perturbed-D}
Let
\[
 k_t(z):=\frac{e^{-izt}-e^{-2iz}}{iz},
 \qquad 0\le t\le2,
\]
with the value at $z=0$ defined by continuity.  Under the hypotheses of
Lemma~\ref{lem:Robin-perturbed-characteristic},
\begin{equation}
 \sup_{0\le t\le2}
 \left|\int_{-r}^{r}D_a(z)k_t(z)\,dz\right|
 \le C(Q,A)R^{-\beta}
 +C(Q,A)\varepsilon r\log R\,e^{C\varepsilon r\log R}.
 \label{eq:Robin-perturbed-Fourier-integral}
\end{equation}
\end{lemma}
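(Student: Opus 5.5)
Write $F=F_a(\cdot;q)$, $\widetilde F=F_a(\cdot;\widetilde q)$, so that $D_a=F-\widetilde F=\widetilde F\,(F/\widetilde F-1)$. By Lemma~\ref{lem:Robin-perturbed-characteristic} and the Hadamard splitting \eqref{eq:Robin-Hadamard-split},
\[
 \frac{F(z)}{\widetilde F(z)}=E_R(z)\,W_R^0(z)\,\frac{P_{\ge R}(z)}{\widetilde P_{\ge R}(z)},
\]
with the pairwise cutoff convention. The plan is to bound this ratio minus one on the contour $\mathcal C_r$, not on the real segment $[-r,r]$. Then I move the integral $\int_{-r}^rD_ak_t\,dz$ onto $\mathcal C_r$. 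This is possible because $D_ak_t$ is entire: $k_t$ is entire with a removable singularity at $0$. Cauchy's theorem replaces $[-r,r]$ by $\mathcal C_r$ together with two short vertical joining pieces at $\operatorname{Re}z=\pm r$. On all of these, $|k_t(z)|\le C/(1+|z|)$, since $\operatorname{Im}z\in[0,1]$ and $|e^{-izt}|\le e^{t\operatorname{Im}z}\le e^2$ stay bounded. On the fixed semicircle of radius $C_0+1$ the same bound holds trivially, since $k_t$ is continuous there and the semicircle has fixed radius.

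The key steps are as follows.

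\textbf{(i) Tail quotient.} By \eqref{eq:tailproduct}, $P_{\ge R}/\widetilde P_{\ge R}=1+O(r^2/R)$ on $\mathcal C_r$, because $|z|\le 2r\ll R$ there. Since $r^2/R\le R^{-1/3}\cdot R^{2\alpha-2/3}$ and $\alpha<1/3$, this is $O(R^{-1/3})$.

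\textbf{(ii) Affine factor.} By \eqref{eq:Robin-affine-low}, $E_R=1+O(\varepsilon\log R+R^{-1/3})e^{C\varepsilon\log R}$ on $|z|\le 2r$.

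\textbf{(iii) Finite paired products.} Write $W_R^0-1=e^{\log W_R^0}-1$ and use $|e^w-1|\le|w|e^{|w|}$. The pointwise bound \eqref{eq:Robin-W-sup} gives $|\log W_R^0|\le C\varepsilon r$ for the exponential factor. The $|dz|$-integral of $|\log W_R^0|$ over $\mathcal C_r$ is controlled by \eqref{eq:Robin-logW-integral}.

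Collecting (i)--(iii), $|F/\widetilde F-1|\le \bigl(|\log W_R^0|+C\varepsilon\log R+CR^{-1/3}\bigr)e^{C\varepsilon r\log R}$ on $\mathcal C_r$. I then multiply by $|\widetilde F(z)|\,|k_t(z)|$. Off the real axis, in the strip $0\le\operatorname{Im}z\le1$, estimate \eqref{eq:complexbound} gives $|\widetilde F(z)|\le C(1+|z|)$. Hence $|\widetilde F k_t|\le C$, uniformly in $t$, on the contour and the joining pieces.

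Integrating then gives three contributions:
\begin{itemize}
\item the $|\log W_R^0|$ term contributes $C\varepsilon r\log R\,e^{C\varepsilon r\log R}$;
\item the term $C\varepsilon\log R$ times the contour length $O(r)$ is also of this form;
\item the remaining term is $R^{-1/3}\cdot r\,e^{C\varepsilon r\log R}$.
\end{itemize}

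The main obstacle is this last term. Naively it is $R^{-1/3+\alpha}$, which is $\le R^{-\beta}$ since $\alpha-1/3<-\beta$. However, it carries the exponential factor $e^{C\varepsilon r\log R}$. I will handle it by splitting into two cases:
\begin{itemize}
\item If $\varepsilon r\log R\le1$, the exponential is bounded and the term is $O(R^{-\beta})$.
\item Otherwise $\varepsilon r\log R>1$, and I absorb $R^{-1/3}r\,e^{C\varepsilon r\log R}$ into $C\varepsilon r\log R\,e^{C\varepsilon r\log R}$, using $R^{-1/3}r\le 1\le\varepsilon r\log R$.
\end{itemize}
This yields \eqref{eq:Robin-perturbed-Fourier-integral}.

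A secondary point to check is the joining vertical pieces at $\operatorname{Re}z=\pm r$. These have length $1$, and the same ratio bound applies there because \eqref{eq:Robin-W-sup} was stated on the full contour including the joining pieces. They therefore contribute within the stated bound. If that bound were not available on the joins, I would instead use the crude estimate $|D_a|\le C(1+r)$ together with $|k_t|\le C/r$. This crude estimate is only $O(1)$, which is too weak, so I would rely on the contour lemma as stated.
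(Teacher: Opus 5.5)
Your proposal is correct and follows the paper's proof. You deform $[-r,r]$ onto $\mathcal C_r$, factor $F_a(\cdot;q)/F_a(\cdot;\widetilde q)=E_RW_R^0T_R$, use $|F_a(z;\widetilde q)k_t(z)|\le C$ on the contour, and add the $W_R^0$, affine and tail contributions; your case split on $\varepsilon r\log R$ handles the exponential cross term more carefully than the paper, which omits it.

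Two small slips. First, the bound $|F_a(z;\widetilde q)|\le C(1+|z|)$ on the contour comes from \eqref{eq:Frepresentation}, since $|e^{izt}|\le 1$ for $\operatorname{Im}z\ge 0$. It does not follow from \eqref{eq:complexbound}, whose factor $e^{2|z|}$ is useless here. Second, \eqref{eq:Robin-W-sup} controls only $\operatorname{Re}\log W_R^0$, not $|\log W_R^0|$; this still suffices via $|e^w-1|\le|w|e^{\max(0,\operatorname{Re}w)}$.
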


\begin{proof}
Because $D_a$ is entire, deform the integral over $[-r,r]$ to the contour
$\mathcal C_r$.  Write
\begin{equation}
 \frac{F_a(z;q)}{F_a(z;\widetilde q)}
 =E_R(z)W_R^0(z)T_R(z),
 \label{eq:Robin-perturbed-factorization}
\end{equation}
where $T_R=P_{\ge R}/\widetilde P_{\ge R}$.  On $\mathcal C_r$,
\[
 |F_a(z;\widetilde q)|\le C(1+|z|),
 \qquad
 |k_t(z)|\le C\min\!\left(1,\frac1{|z|}\right),
\]
with the first bound using the standard entire-function estimate and the second
following directly from the definition of $k_t$ (the fixed detours near zero
are harmless).  Therefore
\[
 |F_a(z;\widetilde q)k_t(z)|\le C
\]
on the whole contour, after increasing the constant to cover $|z|\le1$.

From \eqref{eq:Robin-logW-integral}, \eqref{eq:Robin-W-sup}, and
$|e^u-1|\le|u|e^{|u|}$,
\[
 \int_{\mathcal C_r}|W_R^0(z)-1|\,|dz|
 \le C\varepsilon r\log R\,e^{C\varepsilon r\log R}.
\]
The affine estimate \eqref{eq:Robin-affine-low} contributes
\[
 C r\left(\varepsilon\log R+R^{-1/3}\right),
\]
which is bounded by
$C\varepsilon r\log R+C R^{-1/3}r$.
For the tail quotient,
\[
 |T_R(z)-1|\le C\frac{|z|^2}{R},
\]

after increasing $R_0$ so that $r^2/R$ is small.  Hence its contribution is
\[
 C\frac{r^3}{R}.
\]
Since $r=R^\alpha$,
\[
 \frac{r^3}{R}=R^{3\alpha-1}
 \le R^{-\beta},
\]
because
\begin{equation}
 3\alpha-1\le-\beta,
 \label{eq:Robin-tail-exponent}
\end{equation}
which is the correct exponent comparison.  Also
$R^{-1/3}r\le R^{-\beta}$ because $\alpha-1/3<-\beta$.
Combining the three contributions proves the lemma.
\end{proof}

\begin{lemma}[Perturbed Robin Fourier primitive]
\label{lem:Robin-perturbed-H}
Let
\[
 H_a(t)=\int_t^2\eta_a(s)\,ds,
 \qquad
 D_a(z)-B(0,0)=\int_0^2\eta_a(s)e^{izs}\,ds.
\]
Under the hypotheses of Lemma~\ref{lem:Robin-perturbed-characteristic}, if
$\|\widetilde q-q\|_p\le Q_p$, then
\begin{equation}
 \sup_{0\le t\le2}|H_a(t)|
 \le C(Q,A,p,Q_p)
 \left[R^{-\beta}
 +\varepsilon r\log R\,e^{C\varepsilon r\log R}\right].
 \label{eq:Robin-perturbed-H-bound}
\end{equation}
\end{lemma}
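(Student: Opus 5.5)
The plan mirrors the proof of Proposition~\ref{prop:Robin-B0-direct}, with the low-frequency input replaced by Lemma~\ref{lem:Robin-perturbed-D}. We start from the inversion identity \eqref{eq:Robin-H-Fourier},
\[
 H_a(t)=\frac1{2\pi}\int_{\mathbb R}\bigl(D_a(z)-B(0,0)\bigr)k_t(z)\,dz ,
\]
which is valid because $\eta_a\in L^p(0,2)$ with $p>1$. We split it at $|z|=r=R^\alpha$ into
\[
 \frac1{2\pi}\int_{-r}^{r}D_a(z)k_t(z)\,dz,\qquad
 -\frac{B(0,0)}{2\pi}\int_{-r}^{r}k_t(z)\,dz,\qquad
 \frac1{2\pi}\int_{|z|>r}\bigl(D_a(z)-B(0,0)\bigr)k_t(z)\,dz .
\]
The first term is bounded directly by Lemma~\ref{lem:Robin-perturbed-D}. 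This gives the two contributions $R^{-\beta}$ and $\varepsilon r\log R\,e^{C\varepsilon r\log R}$, uniformly in $t\in[0,2]$.

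The high-frequency term is handled exactly as in \eqref{eq:Robin-H-high}. We use $|k_t(z)|\le 2/|z|$, Hausdorff--Young for $\eta_a$, and H\"older. The bound \eqref{eq:Robin-ha-bound} together with $\|\widetilde q-q\|_p\le Q_p$ gives $\|\eta_a\|_p\le C(Q,A,p,Q_p)$, so this term is at most $C\,r^{-1/p'}=C R^{-\beta}$. For the middle term, \eqref{eq:Robin-H-kernel} shows that $\int_{-r}^r k_t\,dz$ is bounded by an absolute constant, so the term is at most $C|B(0,0)|$. By \eqref{eq:B00-integral}, $|B(0,0)|\le \tfrac12\|\widetilde q-q\|_1$, which is bounded by the trivial constant but not by the small quantity we want.

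This scalar term is therefore the main obstacle. In the unperturbed setting it was controlled by Theorem~\ref{thm:Robin-Fourier}, and that theorem assumes the zeros coincide exactly. The plan is to redo the argument of \eqref{eq:B00-average} in the perturbed setting. We write
\[
 B(0,0)\int\chi=\int\chi_r D_a\,dz-\int_0^2\eta_a(t)\widehat{\chi_r}(-t)\,dt .
\]
The second term is $O(r^{-1/p'})=O(R^{-\beta})$, again by Hausdorff--Young. The first is treated like Lemma~\ref{lem:Robin-perturbed-D}, with $k_t$ replaced by the smooth weight $\chi_r$.

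The weight $\chi_r$ is not entire, so the contour deformation is not available. Instead we use $D_a=F_a(\cdot;\widetilde q)\bigl(E_RW_R^0T_R-1\bigr)$ directly on the real segment $[-2r,2r]$. The factors $E_R-1$ and $T_R-1$ are controlled by \eqref{eq:Robin-affine-low} and $r^2/R$. The factor $W_R^0-1$ on the real axis needs the same dyadic-shell estimate as \eqref{eq:Robin-logW-integral}, now with zeros possibly near the real line.

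If that real-axis estimate proves awkward, the fallback is to replace $\chi$ by an entire weight of exponential type. One choice is a Fej\'er-type kernel $\chi_r(z)=\tfrac1r\bigl(\tfrac{\sin(z/r)}{z/r}\bigr)^2$, whose Fourier transform is compactly supported at scale $1/r$, so the Hausdorff--Young step is unchanged. We would then shift from $\mathbb R$ to $\mathcal C_r$ on the bulk and absorb the remote part using $|F_a|\le C(1+|z|)$ and the decay $|z|^{-2}$ of the kernel.

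Either way, $|\int\chi_rD_a|\le C\bigl(R^{-\beta}+\varepsilon r\log R\,e^{C\varepsilon r\log R}\bigr)$. Inserting this into the middle term and collecting the three contributions yields \eqref{eq:Robin-perturbed-H-bound}.
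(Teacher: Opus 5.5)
Your three-term split, the use of Lemma~\ref{lem:Robin-perturbed-D} for $\int_{-r}^{r}D_ak_t\,dz$, the Hausdorff--Young bound for $|z|>r$, and the reduction of the middle term to $C|B(0,0)|$ all agree with the paper. The gap is at the step you call the main obstacle: neither of your two routes actually produces the perturbed bound for $|B(0,0)|$.

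The real-axis route needs $\bigl|\log\frac{z-z_n}{z-\widetilde z_n}\bigr|\le C\varepsilon/|z-\widetilde z_n|$ with $|z-\widetilde z_n|$ bounded below. For complex $q$ (and $a$), however, the zeros may lie on or arbitrarily close to $\mathbb R$; this is why $\mathcal C_r$ sits at height $1$ with a detour. So $W_R^0$ may have poles on $[-2r,2r]$, and $\int_{-2r}^{2r}|W_R^0-1|\,dx$ need not even be finite.

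The Fej\'er fallback fails at the remote part. The kernel $\chi_r(x)=r\sin^2(x/r)/x^2$ puts a fixed fraction of its mass in $|x|>r$, where you have no perturbative information. With your bound $|D_a|\le C(1+|x|)$, the integral $\int_{|x|>r}r x^{-2}(1+|x|)\,dx$ diverges. Even using that $D_a$ is bounded on $\mathbb R$ (the $iz$ terms cancel), the remote contribution is $O(1)$, not $O(R^{-\beta})$.

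The missing observation is that the weight need not be smooth. Average with the sharp cutoff $\frac1{2r}\mathbf 1_{[-r,r]}$, as the paper does:
\[
 B(0,0)=\frac1{2r}\int_{-r}^{r}D_a(z)\,dz-\int_0^2\eta_a(t)\,\frac{\sin(rt)}{rt}\,dt .
\]
Here $\int_{-r}^{r}D_a\,dz$ is the integral of an entire function between the fixed endpoints $\pm r$, so it can be deformed to $\mathcal C_r$. The proof of Lemma~\ref{lem:Robin-perturbed-D} then applies with $k_t$ replaced by the constant $1/(2r)$, since $|F_a(z;\widetilde q)|/(2r)\le C$ on $\mathcal C_r$. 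This gives
\[
 C\bigl[\varepsilon r\log R\,e^{C\varepsilon r\log R}+R^{-1/3}r+r^3/R\bigr]\le C\bigl[R^{-\beta}+\varepsilon r\log R\,e^{C\varepsilon r\log R}\bigr].
\]
The lack of smoothness costs nothing on the Fourier side. Indeed $\|\sin(rt)/(rt)\|_{L^{p'}(0,2)}\le C r^{-1/p'}$, because $|\sin u/u|^{p'}$ is integrable for $p'>1$, so the second term is $O(\|\eta_a\|_pR^{-\beta})$. With this replacement your argument closes as in the paper.
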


\begin{proof}
Fourier inversion for the primitive gives
\begin{equation}
 H_a(t)=\frac1{2\pi}\int_{\mathbb R}
 \bigl(D_a(z)-B(0,0)\bigr)
 \frac{e^{-izt}-e^{-2iz}}{iz}\,dz.
 \label{eq:Robin-perturbed-H-Fourier}
\end{equation}
The quotient at $z=0$ is interpreted by continuity.  Split the integral at
$|z|=r$.

For $|z|>r$, Hausdorff--Young and
$|k_t(z)|\le2/|z|$ give
\[
 |H_a^{\rm high}(t)|
 \le C\|\eta_a\|_p r^{-1/p'}
 \le C(Q,A,p,Q_p)R^{-\beta}.
\]

For the low-frequency part, apply
Lemma~\ref{lem:Robin-perturbed-D} to the $D_a$ term.  The scalar term is estimated independently by averaging the same identity
uniformly over $[-r,r]$:
\begin{equation}
 B(0,0)=\frac1{2r}\int_{-r}^{r}D_a(z)\,dz
 -\frac1{2r}\int_0^2\eta_a(t)
 \left(\int_{-r}^{r}e^{izt}\,dz\right)dt.
 \label{eq:Robin-perturbed-B00-average}
\end{equation}
The first integral is entire and can therefore be deformed to $\mathcal C_r$.
Using \eqref{eq:Robin-perturbed-factorization} and the same estimates as in
the proof of Lemma~\ref{lem:Robin-perturbed-D}, with the extra factor
$1/r$, gives
\[
 \frac1{2r}\left|\int_{-r}^{r}D_a(z)\,dz\right|
 \le C\left[\varepsilon r\log R\,e^{C\varepsilon r\log R}
 +R^{-1/3}r+\frac{r^3}{R}\right].
\]
For the second term, Hausdorff--Young (or directly the scaling of the sinc
kernel) gives
\[
 \frac1{2r}\left|\int_0^2\eta_a(t)
 \left(\int_{-r}^{r}e^{izt}\,dz\right)dt\right|
 \le C\|\eta_a\|_p r^{-1/p'}
 \le C(Q,A,p,Q_p)R^{-\beta}.
\]
Using $R^{-1/3}r\le R^{-\beta}$ and
$r^3/R\le R^{-\beta}$ gives
\begin{equation}
 |B(0,0)|
 \le C(Q,A,p,Q_p)
 \left[R^{-\beta}
 +\varepsilon r\log R\,e^{C\varepsilon r\log R}\right].
 \label{eq:Robin-perturbed-B00}
\end{equation}
Finally, the low-frequency scalar contribution in
\eqref{eq:Robin-perturbed-H-Fourier} is bounded by
$C|B(0,0)|$, because
\[
 \sup_{0\le t\le2}
 \left|\int_{-r}^{r}k_t(z)\,dz\right|\le C.
\]
Combining the low- and high-frequency estimates proves the lemma.
\end{proof}

\begin{theorem}[Perturbed Robin inverse-resonance stability]
\label{thm:Robin-perturbed-stability}
Let $Q_1,Q_p,A>0$ and $1<p\le2$.  There are constants
$C=C(Q_1,Q_p,A,p)$ and $R_0=R_0(Q_1,Q_p,A,p)$ such that, for
$R\ge R_0$ and $0<\varepsilon<3/4$, the following holds.  Suppose
$q,\widetilde q\in B(Q_1)$,
\[
 \|\widetilde q-q\|_{L^p(0,1)}\le Q_p,
 \qquad |a|\le A,
\]
and the nonzero zeros of $F_a(z;q)$ and $F_a(z;\widetilde q)$ in
$|z|<R$ can be paired, with multiplicities, so that
\eqref{eq:Robin-zero-perturbation} holds.  If $0$ is a zero, its multiplicity
is assumed to be the same for both characteristic functions.  Then
\begin{equation}
 \boxed{
 \sup_{0\le x\le1}
 \left|\int_x^1(\widetilde q-q)(s)\,ds\right|
 \le C
 \left[R^{-\beta}
 +\varepsilon R^\alpha\log R\,
 e^{C\varepsilon R^\alpha\log R}\right],
 }
 \label{eq:Robin-perturbed-main}
\end{equation}
where
\[
 \alpha=\frac{p}{3(3p-1)},
 \qquad
 \beta=\frac{p-1}{3(3p-1)}.
 \]
\end{theorem}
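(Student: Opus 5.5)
The proof will follow the chain already used in the exact-data case, with the perturbed Fourier primitive estimate (Lemma~\ref{lem:Robin-perturbed-H}) taking the place of \eqref{eq:Robin-H-final}. Throughout, write
\[
 \Delta:=R^{-\beta}+\varepsilon r\log R\,e^{C\varepsilon r\log R},
 \qquad r=R^\alpha .
\]
Note that $\Delta$ is exactly the bracket on the right-hand side of \eqref{eq:Robin-perturbed-main}.

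First, dispose of the large-perturbation regime. If $\varepsilon r\log R\ge c_0$ for the fixed small constant $c_0$ of Lemma~\ref{lem:Robin-perturbed-characteristic}, then $\Delta\ge c_0$. In that case the trivial bound
\[
 \sup_x\Bigl|\int_x^1(\widetilde q-q)\Bigr|\le\|\widetilde q-q\|_1\le\|\widetilde q-q\|_p\le Q_p
\]
gives \eqref{eq:Robin-perturbed-main} after enlarging $C$. So assume $\varepsilon r\log R<c_0$, which also forces $\varepsilon\log R$ to be small. Then all hypotheses of Lemmas~\ref{lem:Robin-perturbed-characteristic}--\ref{lem:Robin-perturbed-H} are in force. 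Those lemmas give the estimate \eqref{eq:Robin-perturbed-B00}, namely $|B(0,0)|\le C\Delta$, and also $\sup_t|H_a(t)|\le C\Delta$.

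Second, transfer these bounds to the boundary trace $b(t)=B(0,t)$. This step does not use the spectral data at all, so it applies verbatim. By the Volterra relation \eqref{eq:Robin-eta-c-volterra} and the primitive-norm resolvent bound \eqref{eq:Robin-primitive-resolvent} we get \eqref{eq:Robin-c-from-eta}:
\[
 \sup_t\Bigl|\int_t^2 c_a\Bigr|\le C(Q,A)\bigl(|B(0,0)|+\sup_t|H_a(t)|\bigr)\le C\Delta .
\]
Next, apply the variation-of-constants formula \eqref{eq:Robin-trace-primitive-variation} with $b(2)=0$, then backward Gronwall, which yields \eqref{eq:Robin-trace-from-h}. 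Together these give $\sup_t|B(0,t)|\le C\Delta$.

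Third, apply Lemma~\ref{lem:uniform-volterra} with $C_*=C\Delta$. This gives $|B(x,t)|\le C\Delta$ throughout the characteristic triangle. The diagonal identity \eqref{eq:diagonal} then gives
\[
 \Bigl|\int_x^1(\widetilde q-q)\Bigr|=2|B(x,x)|\le C\Delta
\]
for every $x\in[0,1]$. This is \eqref{eq:Robin-perturbed-main}.

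The step I expect to need the most care is the first: checking that the constants in the perturbed lemmas depend only on $(Q_1,Q_p,A,p)$, and that the smallness threshold is handled consistently. In particular, Lemma~\ref{lem:Robin-perturbed-characteristic} requires $\varepsilon\log R$ small, while the final bound involves $\varepsilon r\log R$. Since $r\ge1$, smallness of $\varepsilon r\log R$ implies smallness of $\varepsilon\log R$, so the case split above is consistent. The constant $C$ in the exponent also needs to be fixed uniformly, but this can be done by taking the maximum of the constants arising in Lemma~\ref{lem:Robin-perturbed-D} and Lemma~\ref{lem:Robin-perturbed-H}.
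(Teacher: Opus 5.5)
Your proposal is correct and follows the same chain as the paper. It uses the perturbed primitive bound of Lemma~\ref{lem:Robin-perturbed-H}, passes to $B(0,t)$ via the Volterra resolvent and backward Gronwall, then applies Lemma~\ref{lem:uniform-volterra} and the diagonal identity \eqref{eq:diagonal}. You are in fact more careful than the paper's proof, which applies \eqref{eq:Robin-trace-from-h} directly to the primitive of $\eta_a$ without explicitly routing through \eqref{eq:Robin-c-from-eta} and the bound \eqref{eq:Robin-perturbed-B00} on $B(0,0)$, and which leaves the large-$\varepsilon$ regime to the closing remark of Lemma~\ref{lem:Robin-perturbed-characteristic}.
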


\begin{remark}[Perturbative regime]
The perturbative term is informative when $\varepsilon R^\alpha\log R$ is bounded.
In particular, if $\varepsilon R^\alpha\log R\le c_0$, then
\[
 R^{-\beta}+
 \varepsilon R^\alpha\log R\,e^{C\varepsilon R^\alpha\log R}
 \le C_{c_0}\bigl(R^{-\beta}+\varepsilon R^\alpha\log R\bigr).
\]
\end{remark}

\begin{proof}
Lemma~\ref{lem:Robin-perturbed-H} gives the bound for the primitive of the
Robin boundary combination.  The backward Gronwall estimate
\eqref{eq:Robin-trace-from-h} therefore gives
\[
 \sup_{0\le t\le2}|B(0,t)|
 \le C(Q_1,A,p,Q_p)
 \left[R^{-\beta}+\varepsilon R^\alpha\log R\,e^{C\varepsilon R^\alpha\log R}\right].
\]
The Volterra propagation lemma \ref{lem:uniform-volterra} gives the same
bound for $B(x,t)$ in the characteristic triangle.  Finally,
\eqref{eq:diagonal} yields
\[
 \left|\int_x^1(\widetilde q-q)(s)\,ds\right|=2|B(x,x)|,
\]
which proves \eqref{eq:Robin-perturbed-main}.
\end{proof}

\section{Uniqueness}

The stability theorem gives the quantitative $L^p$ result.  Complete-data
uniqueness itself does not require the difference of the potentials to lie in
$L^p$, because the Volterra representation above gives an $L^1$ Fourier
representation of the characteristic-function difference.

\begin{corollary}[Uniqueness in $L^1$ for fixed Robin parameter]
\label{cor:uniqueness-L1}
Let $q,\widetilde q\in B(Q)$ and let $a\in\mathbb C$ be fixed.  If the
complete sets of zeros of $F_a(z;q)$ and $F_a(z;\widetilde q)$ coincide,
including multiplicities, then
\[
 q=\widetilde q\qquad\text{a.e. on }[0,1].
\]
\end{corollary}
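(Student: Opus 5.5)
The natural route is to show directly that the two characteristic functions coincide, and then read off $q=\widetilde q$ from the Fourier decomposition and the transformation-kernel machinery. We deliberately avoid passing to the limit $R\to\infty$ in Theorem~\ref{thm:mainstability}, since that result assumes $\widetilde q-q\in L^p$ with $p>1$, which is not available here.

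\textbf{Step 1: $F_a(\cdot;q)=F_a(\cdot;\widetilde q)$.} If all zeros coincide with multiplicities, the Hadamard factorizations \eqref{eq:hadamard} share $z^m$ and the canonical product. Hence
\[
F_a(z;q)=e^{h(z)}F_a(z;\widetilde q),\qquad h(z)=c_1z+c_0 .
\]
On the discs $|z-3i\rho|\le\rho$, the normalized bound \eqref{eq:normalizedbound}, valid for both potentials, gives $e^{h(z)}=1+O(\rho^{-1})$. Along the imaginary axis $z=iy$ with $y\to\infty$, this forces $\operatorname{Re}(ic_1)y$ to stay bounded, so $\operatorname{Re}(ic_1)=0$. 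The oscillation of $\operatorname{Im}(ic_1 y)$ must also vanish in the limit, which forces $c_1=0$. Then $e^{c_0}=\lim_{y\to\infty}F_a(iy;q)/F_a(iy;\widetilde q)=1$. Therefore $D_a\equiv0$. This is exactly the finite-data argument with $R=\infty$, where the tail products are absent.

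\textbf{Step 2: $B(0,0)=0$ and $\eta_a=0$.} By Lemma~\ref{lem:Robin-Fourier-decomp} we have $0=D_a(x)=B(0,0)+\widehat{\eta_a}(x)$ for real $x$. The trace estimates in that proof hold for $p=1$, since \eqref{eq:Hx-Lp} is stated for $1\le p\le\infty$. So $\eta_a\in L^1(0,2)$, and Riemann--Lebesgue gives $\widehat{\eta_a}(x)\to0$ as $|x|\to\infty$. Letting $|x|\to\infty$ yields $B(0,0)=0$. Then $\widehat{\eta_a}\equiv0$ on $\mathbb R$, and injectivity of the Fourier transform on $L^1$ gives $\eta_a=0$ a.e.

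\textbf{Step 3: propagate to the interior.} The primitive-norm resolvent estimate \eqref{eq:Robin-c-from-eta}, which only requires $L^1$ data, gives $H_c\equiv0$. Variation of constants \eqref{eq:Robin-trace-primitive-variation} together with backward Gronwall \eqref{eq:Robin-trace-from-h} then gives $B(0,t)=0$ for $0\le t\le2$. Lemma~\ref{lem:uniform-volterra} with $C_*=0$ gives $B\equiv0$ on the characteristic triangle. The diagonal identity \eqref{eq:diagonal} now yields $\int_x^1(\widetilde q-q)=0$ for every $x\in[0,1]$. Differentiating, via Lebesgue differentiation, gives $q=\widetilde q$ a.e.

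\textbf{Main obstacle.} The delicate step is Step 1, specifically ruling out a nontrivial affine factor $e^{c_1z+c_0}$. The issue is that $F_a/(iz-a)$ is only asymptotically $1$ in a sector near the positive imaginary axis. The plan is to use rays $iy$, $y\to\infty$, where \eqref{eq:normalizedbound} applies uniformly. If that asymptotic alone is judged insufficient to pin down $\operatorname{Im}c_1$, there is a fallback. On the real line both functions are $O(1+|x|)$ and equal $ix-a+O(1)$ by \eqref{eq:Frepresentation} and Riemann--Lebesgue, so $|e^{h(x)}|\to1$ as $x\to\pm\infty$. This forces $\operatorname{Re}c_1=0$. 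Combined with $\operatorname{Re}(ic_1)=0$ from the imaginary-axis limit, this gives $c_1=0$.
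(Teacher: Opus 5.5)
Your proposal is correct and follows essentially the same route as the paper's proof. Hadamard factorization plus the high-energy normalization give $F_a(\cdot;q)=F_a(\cdot;\widetilde q)$; Riemann--Lebesgue and injectivity of the Fourier transform on $L^1$ then give $B(0,0)=0$ and $\eta_a=0$; finally the Volterra relation \eqref{eq:Robin-eta-c-volterra}, backward Gronwall on the Robin trace equation, Volterra propagation and the diagonal identity finish the argument. The only differences are cosmetic: you spell out the elimination of the affine factor $e^{c_1z+c_0}$ along the positive imaginary axis, which the paper states in one line, and you run the $c_a$ step through its primitive $H_c$ rather than through $c_a$ itself, which is equivalent.
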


\begin{proof}
Hadamard factorization together with the common high-energy normalization
shows that the two characteristic functions are equal.  On the real axis
we have
\[
0=D_a(z)=B(0,0)+\int_0^2\eta_a(t)e^{izt}\,dt,
\]
where $\eta_a\in L^1(0,2)$ by the direct Jost-Volterra representation
used in Lemma~\ref{lem:Robin-Fourier-decomp}.  The Riemann--Lebesgue lemma
therefore gives $B(0,0)=0$, and uniqueness of the Fourier transform on
$L^1$ gives $\eta_a=0$ a.e.  Equation
\eqref{eq:Robin-eta-c-volterra}, followed by the Volterra resolvent estimate
\eqref{eq:Robin-primitive-resolvent}, gives $c_a=0$.  The boundary equation
\eqref{eq:Robin-trace-ODE} then has only the Volterra term $J(t)$ on its
right-hand side.  The estimate \eqref{eq:Robin-characteristic-bound} yields
$M(t)\le C(Q,A)\int_t^2M(s)\,ds$, so backward Gronwall gives
$M(t)=0$ and hence $b=B(0,\cdot)=0$.  The Volterra propagation
\eqref{eq:Robin-Volterra} then gives $B=0$ throughout the characteristic
triangle.  Finally, \eqref{eq:diagonal} implies
$\int_x^1(\widetilde q-q)=0$ for every $x$, and therefore
$q=\widetilde q$ a.e.
\end{proof}

\begin{corollary}[Quantitative uniqueness consequence]
\label{cor:uniqueness}
Let \(q,\widetilde q\in B(Q)\) with
\(\widetilde q-q\in L^p(0,1)\) for some \(1<p\le2\), and suppose the
\(L^p\)-difference has a fixed bound.  Let \(a\in\mathbb C\) be fixed, and suppose
that the complete sets of zeros of
\[
 F_a(z;q)
 \quad\text{and}\quad
 F_a(z;\widetilde q)
\]
coincide, including multiplicities.  Then
\[
 q=\widetilde q
 \qquad\text{a.e. on }[0,1].
\]
\end{corollary}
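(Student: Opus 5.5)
The plan is to obtain this corollary as the $R\to\infty$ limit of the finite-data stability theorem, Theorem~\ref{thm:mainstability}. Since the complete zero sets of $F_a(\cdot;q)$ and $F_a(\cdot;\widetilde q)$ coincide with multiplicities, for every $R>0$ the zeros in $|z|<R$ coincide with multiplicities. Fix $A:=|a|$ and take $Q_p$ to be the assumed bound on $\|\widetilde q-q\|_{L^p(0,1)}$. Then for every $R\ge R_0(Q,A,Q_p,p)$ Theorem~\ref{thm:mainstability} gives
\[
 \sup_{0\le x\le1}\left|\int_x^1(\widetilde q-q)(s)\,ds\right|\le C(Q,A,Q_p,p)R^{-\beta},
 \qquad \beta=\frac{p-1}{3(3p-1)}>0 .
\]
The constant on the right does not depend on $R$, and $\beta>0$ because $p>1$. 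Letting $R\to\infty$ therefore shows that $\int_x^1(\widetilde q-q)(s)\,ds=0$ for every $x\in[0,1]$.

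The function $x\mapsto\int_x^1(\widetilde q-q)$ is absolutely continuous, and its derivative is $-(\widetilde q-q)$ almost everywhere by the Lebesgue differentiation theorem. Since the function vanishes identically, $\widetilde q=q$ a.e. on $[0,1]$.

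The only point needing care is to check that the hypotheses of Theorem~\ref{thm:mainstability} hold uniformly in $R$. The parameters $Q$, $A$, $Q_p$, $p$ are all fixed independently of $R$, so a single constant works for all $R\ge R_0$, and the limit is legitimate. No step looks like a real obstacle.

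As an alternative route, one could invoke Corollary~\ref{cor:uniqueness-L1} directly, because $q,\widetilde q\in B(Q)$ already suffices there. I would mention this in a remark, but present the limiting argument as the proof, since it is the quantitative consequence the corollary is named for.
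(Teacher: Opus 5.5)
Your proposal is correct and follows the paper's proof essentially verbatim. You apply Theorem~\ref{thm:mainstability} for all sufficiently large $R$ with $R$-independent constants, let $R\to\infty$ to get $\int_x^1(\widetilde q-q)=0$ for every $x$, and conclude $q=\widetilde q$ a.e. (One cosmetic point: take $A=\max\{|a|,1\}$ rather than $A=|a|$, since the constants are stated for $A>0$ and $a=0$ is allowed.)
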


\begin{proof}
The zero sets coincide in every disc \(|z|<R\).  For every sufficiently
large \(R\), Theorem~\ref{thm:mainstability} therefore gives
\[
 \sup_{0\le x\le1}
 \left|\int_x^1(\widetilde q-q)(t)\,dt\right|
 \le C(Q,A,Q_p,p)R^{-\beta},
\]
whenever \(\widetilde q-q\in L^p(0,1)\) with a fixed bound \(Q_p\).
Letting \(R\to\infty\) yields
\[
 \int_x^1(\widetilde q-q)(t)\,dt=0,
 \qquad 0\le x\le1.
\]
Hence \(\widetilde q=q\) almost everywhere.
\end{proof}

\section{Remarks on the complex Robin parameter}

The parameter \(a\) is allowed to be complex throughout.  No conjugation of
\(a\) occurs in the definition of the characteristic function or in the
transformation identity.  The constants in the fixed-parameter stability
estimates are uniform for
\[
 |a|\le A.
\]

If \(a\in\mathbb R\) and \(q\) is real-valued, the corresponding realization
is self-adjoint and the usual spectral restrictions apply.  For general
\(q\in B(Q)\) and \(a\in\mathbb C\), no self-adjoint symmetry of the zero set
is assumed or required.

We emphasize the distinction between the stability problem for a fixed Robin
parameter and the inverse problem in which the Robin parameter is itself
unknown.  For fixed \(a\), the characteristic function has the high-energy
normalization
\[
 F_a(z;q)=iz-a+O(1),
\]
with the more precise constant-term expansion
\[
 F_a(z;q)=iz-a-\frac12\int_0^1q(t)\,dt+o(1).
\]
Thus the role of the Dirichlet normalization is taken by the free Robin
expression \(iz-a\), while the next asymptotic coefficient contains both \(a\)
and information about the potential.

Suppose now that two characteristic functions, with parameters \(a\) and
\(\widetilde a\), have the same complete zero set, including multiplicities.
Hadamard factorization then gives
\[
 \frac{F_a(z;q)}{F_{\widetilde a}(z;\widetilde q)}
 =C e^{cz}.
\]
The common leading behavior \(F_a(z;q)=iz+O(1)\) along the positive
imaginary axis forces \(c=0\) and \(C=1\).  Hence equality of the complete
zero sets implies
\[
 F_a(z;q)=F_{\widetilde a}(z;\widetilde q).
\]
Comparing the constant terms gives the necessary scalar identity
\begin{equation}
 \boxed{
 \widetilde a-a
 +\frac12\int_0^1(\widetilde q(t)-q(t))\,dt=0.
 }
 \label{eq:joint-identifiability}
\end{equation}

Equation \eqref{eq:joint-identifiability} is only one scalar relation and is
not by itself a uniqueness theorem for the pair \((a,q)\).  If the Robin
parameter is unknown, the full identity
\[
 F_a(z;q)=F_{\widetilde a}(z;\widetilde q)
\]
must be combined with an additional identifiability argument.  The present
finite-data stability and uniqueness theorem is deliberately formulated for
fixed \(a\), where the complete resonance data determine \(q\) uniquely.

When the same parameter \(a\) is used for both operators, the free factor
\(iz-a\) is common to the two characteristic functions.  Thus no additional
boundary-parameter term occurs in the finite-data argument of Sections~4--9.

\section{Concluding remarks}

The argument developed here shows that the inverse resonance method of
Marletta--Shterenberg--Weikard \cite{MSW} extends to a fixed complex Robin
boundary condition.  The analytic characteristic function is normalized at
infinity by
\[
 iz-a,
\]
rather than by the Dirichlet normalization.  This changes the
order-one bookkeeping in the canonical product but leaves the transformation
operator and Volterra structure intact.

The essential new boundary quantity is
\[
 h_a(t)=B_x(0,t)-aB(0,t).
\]
The Fourier calculation controls its primitive rather than directly
recovering \(B(0,t)\).  The characteristic boundary equation and the backward
Gronwall argument then recover the boundary trace, while the standard
Volterra iteration propagates that estimate into the interior.  Crucially, the
cancellation in the Fourier primitive estimate means that no artificial
factor \(\log R\) is introduced in this boundary reconstruction.

For coincident finite resonance data this yields the quantitative estimate
\[
 \sup_{0\le x\le1}
 \left|\int_x^1(\widetilde q-q)(t)\,dt\right|
 \le C(Q,A,Q_p,p)R^{-\beta},
 \qquad
 \beta=\frac{p-1}{3(3p-1)},
\]
which is the main stability result of the paper.  Letting \(R\to\infty\)
gives uniqueness for the potential when the Robin parameter is fixed.

For perturbed resonance data, the boundary and interior propagation steps are
already available in the form of Lemma~\ref{lem:Robin-perturbed-H} together with the boundary reconstruction estimate \eqref{eq:Robin-trace-from-h}.
The perturbed-data theorem is obtained in Section~10 by estimating that
primitive directly from the perturbed canonical product, using the cancellation
provided by the factor $1/z$ in its Fourier representation.

Finally, when \(a\) is unknown, the zero set determines the characteristic
function after the Jost normalization is fixed, but its constant high-energy
coefficient yields only the scalar relation \eqref{eq:joint-identifiability}.
Additional work is required to establish joint identifiability of \(a\) and
\(q\).  The results proved here therefore concern the fixed complex Robin
parameter problem.

\end{document}